\documentclass[12pt]{article}
\usepackage[margin=1in]{geometry}
\usepackage{amsmath,amssymb,amsthm,mathtools}
\usepackage{bm}
\usepackage{graphicx}
\usepackage{enumitem}
\usepackage[colorlinks=true,linkcolor=blue,citecolor=blue,urlcolor=blue]{hyperref}
\usepackage{setspace}

\newtheorem{theorem}{Theorem}
\newtheorem{proposition}[theorem]{Proposition}
\newtheorem{lemma}[theorem]{Lemma}
\newtheorem{corollary}[theorem]{Corollary}
\theoremstyle{definition}
\newtheorem{definition}[theorem]{Definition}

\theoremstyle{remark}
\newtheorem{remark}[theorem]{Remark}

\DeclareMathOperator{\spec}{spec}
\newcommand{\Lind}{\mathcal{L}}
\newcommand{\Hilb}{\mathcal{H}}
\newcommand{\id}{\mathbb{I}}
\newcommand{\Tr}{\operatorname{Tr}}
\newcommand{\sig}{\sigma}
\newcommand{\gap}{\Delta}

\newcommand{\inner}[2]{\langle #1, #2\rangle}

\newcommand{\Qch}{Q}

\newcommand{\normsig}[1]{\|#1\|_\sig}
\newcommand{\tracenorm}[1]{\|#1\|_1}

\usepackage{titling}
\pretitle{\begin{center}\fontsize{16}{18}\selectfont\bfseries}
\posttitle{\par\end{center}}
\title{Symmetry-initialized quantum Gibbs sampling:\\[0.2cm] a non-Abelian asymmetry cascade
}
\author{\sc Soo-Jong Rey\\[0.2cm]
{\sl Kwangwoon University}\\
{\sl Seoul, Korea}}
\date{\tt sjrey@kw.ac.kr}

\begin{document}
\doublespacing
\maketitle
\begin{abstract}
For a quantum Gibbs sampler whose mixing bottleneck is a weakly broken symmetry, I show that the correct initialization is determined by representation theory. I first prove a general speedup-versus-prefactor dichotomy: by exactly eliminating slow-mode overlap, I convert a nominal prefactor reduction into a fundamental, asymptotic acceleration of the system's mixing time. I then show that when the bottleneck is a weakly broken symmetry, the otherwise exponentially expensive bottleneck eigenvector is the symmetry charge.For a non-Abelian group $G$, I prove that the correct initialization target is the group-averaging asymmetry. Projecting this asymmetry out requires a $G$-invariant input. Partial, subgroup-invariant inputs produce a cascade of mixing-time speedups indexed by the subgroup lattice. Conversely, matching first moments alone is provably insufficient.The asymmetry is a directly measurable initialization target. I verify these results analytically and numerically in an $SU(2)$ Davies sampler where total-spin multiplets constitute the slow modes. The predicted speedup cascade emerges with relaxation rates scaling linearly with the symmetry-breaking parameter. Finally, the model maps the boundaries of the asymmetry target, illustrating where the separate matching of conserved logical data becomes necessary.
\end{abstract}

\tableofcontents

\section{Introduction}
A warm start is the cheapest way to accelerate a Markov chain. It changes the input, not the chain.
For a quantum Gibbs sampler $\Lind$ with target $\sig$, the cost of preparing $\sig$ from a
generic input is set by the slowest mode. A warm start placing no weight on that mode
removes it. The obstruction is that finding the slow mode is in general as hard as diagonalizing
$\Lind$. In this paper, I show that a symmetry resolves this at no cost, and that a \emph{non-Abelian} symmetry produces a structure with no Abelian counterpart.

I prove a speedup-versus-prefactor dichotomy (Theorem~\ref{prop:dichotomy}) that exact cancellation of the slow-mode projection transforms a nominal prefactor improvement into a fundamental increase in the mixing rate. When the bottleneck is a weakly broken symmetry, the otherwise exponentially expensive slow eigenvector collapses into a known a priori symmetry charge. The warm start then reduces to matching a single thermal expectation (Theorem~\ref{prop:symmetry}, Corollary~\ref{cor:warmalg}). For a compact non-Abelian group \(G\), I prove that the correct initialization target is the Haar-invariant asymmetry \(\mathsf{A}_{G}\). By Schur's Lemma, the slow manifold splits into irreducible representations (Lemma~\ref{lem:schur}). Completely projecting out \(\mathsf{A}_{G}\) requires a \(G\)-invariant input, which yields the maximum mixing speedup. Alternatively, subgroup-invariant inputs produce a cascade of partial speedups indexed directly by the subgroup lattice (Theorem~\ref{prop:nonabelian}). Matching first moments alone is provably
insufficient (Corollary~\ref{cor:firstmoments}).

The irreducible representation splitting is determined by applying Schur's lemma to the symmetry-covariant perturbations within Kato degenerate perturbation theory. Operationally, this maps directly to the resource theory of asymmetry \cite{MarvianSpekkens}. The Haar-average channel projects onto the invariant subalgebra, and the asymmetry \(\mathsf{A}_{G}\) measures the relative entropy to this sector. The quantum Pinsker inequality then directly bounds the remaining slow-mode residual by the initial asymmetry.

The covariance of a detailed-balanced generator forces its relaxation spectrum to block-diagonalize by irreducible representation via Schur's lemma. While this backbone is a standard tool, Basso, Bergamaschi, Lin, Ragone, and Stubbs concurrently applied it to $S_n$-symmetric Davies dynamics in $SU(2)$ Heisenberg Gibbs sampler~\cite{LinConcurrent}.\footnote{I thank Thiago Bergamaschi for informing me of the work.}. Here, I deploy this framework as an initialization principle. The exact-projection condition $a_1(\rho_0)=0$ originates with Lu and Raz \cite{LuRaz,KlichRaz}. \cite{LuRaz,KlichRaz}. Rather than canceling modes by explicit per-instance construction as in Beato and Teza \cite{BeatoTeza}, the symmetry directly identifies the target mode and reduces the cancellation to a group-invariance condition on the input. For a single Abelian charge, this recovers the logical-sector initialization of \cite{BergamaschiGheissariLiu}. The extension to non-Abelian multiplets, the group-average asymmetry as the target, and the subgroup-lattice cascade constitute the new content of this work. 

\section{Preliminaries}\label{sec:prelim}

I collect the standard objects, fixing conventions; the reader fluent in dissipative Gibbs samplers may skip to Section~\ref{sec:setup}.

For a Hamiltonian $H$ on a finite-dimensional Hilbert space $\Hilb$ and an inverse temperature
$\beta$, the Gibbs state is $\sig=e^{-\beta H}/Z$, $Z=\Tr e^{-\beta H}$. A \emph{Gibbs sampler} is a
Markovian (Lindblad) generator $\Lind$ acting on density operators,
\begin{equation}
\Lind(\rho)=-i[H_{\mathrm{LS}},\rho]+\sum_{a}\Big(L_a\rho L_a^\dagger-\tfrac12\{L_a^\dagger L_a,\rho\}\Big),
\label{eq:lindblad}
\end{equation}
engineered so that $\Lind(\sig)=0$ and $\rho_t=e^{t\Lind}(\rho_0)\to\sig$ for every input. The
canonical construction is the Davies generator \cite{CKG,DingLiLin,Rouze}: from a set of system
coupling operators $\{S_\alpha\}$ one forms the Bohr-frequency components
$S_\alpha(\omega)=\sum_{\varepsilon'-\varepsilon=\omega}\Pi_\varepsilon S_\alpha\Pi_{\varepsilon'}$,
where $\Pi_\varepsilon$ projects onto the energy-$\varepsilon$ eigenspace of $H$, and takes the
jumps $L_{\alpha,\omega}=\sqrt{\gamma_\alpha(\omega)}\,S_\alpha(\omega)$ with bath rates obeying the
Kubo--Martin--Schwinger (KMS) condition $\gamma_\alpha(-\omega)=e^{-\beta\omega}\gamma_\alpha(\omega)$.
Modern quasi-local constructions \cite{CKG} produce such generators with $O(1)$-local jumps. This is the precise framework I consider here. The subsequent derivations rely exclusively on the two structural properties stated below. 

I equip the space of operators with the {GNS--KMS inner product}:
\begin{equation}
\inner{X}{Y}_\sig=\Tr\!\big(\sig^{1/2}X^\dagger\sig^{1/2}Y\big).
\label{eq:kms}
\end{equation}
A sampler is \emph{reversible} (satisfies quantum detailed balance) if its generator is self-adjoint with respect to the inner product \eqref{eq:kms}: \(\inner{X}{\mathcal{L}^\dagger Y}_\sig = \inner{\mathcal{L}^\dagger X}{Y}_\sig\).  
Here, $\Lind^\dagger$ denotes the Heisenberg-picture generator, defined via the trace duality \( \Tr(Y^\dagger \mathcal{L}(\rho)) = \Tr((\mathcal{L}^\dagger Y)^\dagger \rho) \).
Davies generators are reversible. This self-adjointness has three main consequences. First, the spectrum of \(\mathcal{L}\) is real and nonpositive. Second, eigenvectors with distinct eigenvalues are orthogonal under the \(\inner{\cdot}{\cdot}_\sig\) inner product. Third, the Heisenberg and Schr\"dinger pictures share the same spectral data, so I can compute in whichever picture is convenient. I denote the superoperator adjoint as \(\mathcal{L}^{\dag }\) throughout. This operator satisfies \(\mathcal{L}^\dagger\mathbb{I}=0 \Leftrightarrow \mathcal{L}(\sigma)=0\).

 Davies generators are reversible.
Self-adjointness has three consequences I use repeatedly: the spectrum of $\Lind$ is real and
nonpositive; eigenvectors belonging to distinct rates are $\inner{\cdot}{\cdot}_\sig$-orthogonal;
and the Heisenberg and Schr\"odinger pictures carry the same spectral data, so I may compute with
whichever is convenient. I write the (super)operator adjoint as $\Lind^\dagger$ throughout and note
$\Lind^\dagger\mathbb{I}=0\Leftrightarrow\Lind(\sig)=0$.

I analyze the relationship between the spectral gap, the mixing time, and the system's slow modes. First, I order the relaxation rates defined by the eigenvalues of $- \Lind$ as:
\[
0=g_0<g_1\le g_2\le\cdots,
\]
where $g_0=0$ corresponds to the stationary mode with \(r_0=\sig\) and $\ell_0=\mathbb{I}$. 
The spectral gap is $g_1$.
The mixing time is evaluated in the standard trace norm,
\[
t_{\mathrm{mix}}(\rho_0;\epsilon)=\inf\{t:\tracenorm{\rho_t-\sig}\le\epsilon\}.
\]
While spectral theory lives in the weighted space where the
generator is self-adjoint, I bridge the two norms via the $\chi^2$-divergence: $\chi^2(\rho\|\sig):=\Tr\!\big[(\rho-\sig)\,\sig^{-1/2}(\rho-\sig)\,\sig^{-1/2}\big]$. This definition forms the
state-side dual of Eq.\eqref{eq:kms}. The $\sig^{-1/2}$ weighting is to up-weight low-probability discrepancies, as in the classical $\chi^2$. 
By Cauchy--Schwarz, the trace norm is bounded by the divergence as $\tracenorm{\rho-\sig}^2\le\chi^2(\rho\Vert{}\sig)$. Under a KMS-reversible semigroup, this $\chi^2$ quantity contracts as \(e^{-2g_{1}t}\) \cite{TKRWV}. For any initial input, the divergence is bounded by \(\chi^2(\rho_0\Vert{}\sig)\le1/\sig_{\min}-1\), where \(\sig_{\min}\) is the smallest eigenvalue of $\sig$.

These relations yield the standard mixing time bound: $ t_{\mathrm{mix}}\le g_1^{-1}\big[\log(1/\epsilon)+\tfrac12\log(1/\sig_{\min})\big]. $ Consequently, relaxation rates transfer between the two norms unchanged. Only the prefactors acquire the state-independent additive penalty $\tfrac12\log(1/\sig_{\min})$. I therefore execute all spectral arguments within the KMS/$\chi^2$-framework and quote the final mixing times in trace norm. The core objective of this paper is to determine when and how the leading rate can be upgraded from $g_1$ to $g_2$ purely through the choice of $\rho_0$.

I now recall the formal definition of a symmetry on a Lindbladian. Let $g\mapsto U_g$ be a unitary representation of a compact group \(G\) on \(\mathcal{H}\), with the conjugation action defined as \(\mathcal{U}_g(X)=U_gXU_g^\dagger\). The generator possesses a \emph{weak symmetry}, or is \emph{\(G\)-covariant}, if $[\mathcal{U}_g,\Lind]=0$ for all \(g\). Under this condition, symmetry-related inputs evolve identically and $\sig$ remains \(G\)-invariant. By contrast, the generator possesses a \emph{strong symmetry} if every jump operator commutes directly with \(U_{g}\). Equivalently, every \(G\)-tensor operator constructed from the conserved charges is annihilated by $\Lind^\dagger$. Consequently, a strong symmetry implies a strict conservation law, whereas a weak symmetry does not.This distinction is central to the upcoming derivations. The unperturbed sampler $\Lind_0$ carries a strong symmetry, yielding a conservation law and exact zero modes. The perturbation retains only the weak symmetry. This covariance lifts those zero modes to small, non-zero relaxation rates while keeping the irreducible representation structure completely intact.

One must consider whether strong symmetries can hold in practice. For a generic and efficient physical sampler, the set of jump operators spans the full local algebra. Because nothing non-trivial commutes with every jump operator, no exact strong symmetry exists on the physical device.However, this property applies to the analytical decomposition of the generator rather than the physical hardware. The unperturbed generator \(\mathcal{L}_{0}\) represents the analytically solvable part of the system. This sector is constructed either by restricting the coupling set to \(G\)-scalars or by isolating the \(G\)-symmetric coupling content within a given sampler. The remaining terms constitute the covariant perturbation \(\eta\mathcal{L}_1\).The subsequent derivations require only two precise conditions. First, the kernel of \(\mathcal{L}_{0}^{\dag }\) restricted to the traceless subspace \(\mathbb{I}^{\perp }\) must consist entirely of the charge multiplets. These multiplets must be separated from the bulk spectrum by an \(O(1)\) spectral gap. Second, both \(\mathcal{L}_{0}\) and \(\mathcal{L}_{1}\) must remain \(G\)-covariant. A strong symmetry of \(\mathcal{L}_{0}\) is simply the physical mechanism that guarantees this spectral structure, rather than a restrictive assumption on the physical jump set.

The physical validity of a strong symmetry warrants careful consideration. For a generic, efficient sampler, the jump operators span the full local algebra. As no non-trivial operator commutes with every jump, an exact strong symmetry cannot exist on the physical hardware. However, this symmetry is a property of the mathematical decomposition rather than the physical device. The unperturbed term \(\mathcal{L}_{0}\) is the analytically solvable, theoretically designed sector. This term is constructed either by restricting the coupling set to \(G\)-scalars or by isolating the \(G\)-symmetric coupling content within the sampler. The remaining terms form the covariant perturbation \(\eta\mathcal{L}_1\).The subsequent derivations rely on only two conditions. First, the kernel of \(\mathcal{L}_{0}^{\dag }\) restricted to the traceless subspace \(\mathbb{I}^{\perp }\) must consist entirely of the charge multiplets. These multiplets must be separated from the bulk spectrum by an \(O(1)\) spectral gap. Second, both \(\mathcal{L}_{0}\) and \(\mathcal{L}_{1}\) must remain \(G\)-covariant. A strong symmetry of \(\mathcal{L}_{0}\) is simply the natural mechanism that enforces this spectral structure, rather than an explicit assumption about the physical jump set.

\section{The slow-mode expansion and the warm start}\label{sec:setup}

Reversibility ensures that $\Lind^\dagger$ is self-adjoint under the KMS inner product Eq.\eqref{eq:kms}. I diagonalize this adjoint generator by constructing a KMS-orthonormal basis of eigen-observables $\{\ell_k\}_{k\ge0}$. These operators satisfy $\Lind^\dagger\ell_k=-g_k\ell_k$, where the relaxation rates are real
nonnegative:
\[
0 = g_0 < g_1 \le g_2 \le g_3 \le \cdots
\]
The stationarity mode corresponds to $\ell_0=\id$. The KMS detailed-balance identity takes the form
$\Lind\circ\Gamma_\sig=\Gamma_\sig\circ\Lind^\dagger$, with the channel defined as $\Gamma_\sig(X):=\sig^{1/2}X\sig^{1/2}$. This relation directly yields the Schr\"odinger-picture eigenmodes. The states $r_k:=\Gamma_\sig(\ell_k)$ satisfy
$\Lind r_k=-g_kr_k$, with the stationary state fixed at $r_0=\sig$. Both pictures share the exact same spectrum and differ only in their respective 
eigen-operators. 

Any deviation from stationarity expands in this basis as
\begin{equation}
\rho_t-\sig=\sum_{k\ge1}a_k(\rho_0)\,e^{-g_k t}\,r_k,
\qquad a_k(\rho_0)=\Tr\!\big[\ell_k^\dagger(\rho_0-\sig)\big]. 
\label{eq:expand}
\end{equation}
The expansion coefficients follow from the bi-orthogonality relation
$\Tr[\ell_j^\dagger r_k]=\inner{\ell_j}{\ell_k}_\sig=\delta_{jk}$. Each coefficient $a_k$ represents the
deviation of the observable $\ell_k$ from its thermal expectation value. In this framework, the $\chi^2$-divergence evaluates directly to 
$\chi^2(\rho_t\|\sig)=\sum_{k\ge1}|a_k(\rho_0)|^2e^{-2g_kt}$.

Three operational primitives can mitigate a slow mode bottleneck. First, one can remove the mode from the generator by introducing damping jumps or a non-reversible current. Second, one can monitor its decay and dynamically stop the process via self-certified halting. Third, one can project out the mode entirely within the initial state. This third approach requires no modification of the sampler, altering only the choice of the input state. It is therefore the most efficient strategy when re-engineering the dissipator is impractical or impossible. This mechanism forms the focus of the present section. I write the overlap with the slowest mode explicitly as:
 \begin{equation}
a_1(\rho_0)=\Tr\!\big[\ell_1^\dagger(\rho_0-\sig)\big].
\label{eq:a1}
\end{equation}
 Exact cancellation corresponds to the condition \(a_1(\rho_0)=0\) \cite{LuRaz,KlichRaz}. When the slowest rate is degenerate, meaning the slowest modes form a multiplet \(\{\ell_{1,i}\}_i\) under a non-Abelian symmetry, $a_{1}$ denotes the full vector of overlaps. The cancellation condition then requires the vanishing of the entire projection of $\rho_0-\sig$ onto this slow eigenspace, as eliminating a single coefficient is insufficient. This work relies strictly on this spectral definition of the effect. Vanishing slow-mode overlap guarantees a fundamentally faster asymptotic relaxation rate. Alternative trajectory-crossing formulations compare finite-time prefactors, where one input overtakes another in trace distance. While the spectral condition implies eventual overtaking of any generic input, it is not equivalent to a crossing at a pre-specified time.

I analyze the precise physical consequence of the cancellation condition \(a_1(\rho_0)=0\). The coefficient \(a_{1}\) quantifies the projection of the initial state deviation along the slowest relaxation mode. If this value is non-zero, this component exhibits the longest lifetime and dominates the late-time thermalization behavior. Conversely, if the coefficient is exactly zero, the slow mode is never populated. The system then relaxes according to the next, strictly faster rate \(g_{2}\). The remainder of this work develops methods to systematically achieve and exploit this vanishing condition.

\section{The rate-jump versus prefactor dichotomy}\label{sec:dichotomy}

Recall that the mixing time is evaluated in trace norm as $t_{\mathrm{mix}}(\rho_0;\epsilon)=\inf\{t:\tracenorm{\rho_t-\sig}\le\epsilon\}$. A warm start is defined as far if the initial divergence satisfies $\chi^2(\rho_0\|\sig)=O(1)$.

\begin{theorem}[Approximate cancellation scales prefactor; exact cancellation changes the rate]
\label{prop:dichotomy}
Under the ordered relaxation spectrum, the asymptotic limit  $\epsilon\to0$ yields two distinct regimes. If $a_1(\rho_0)\neq0$,
the mixing time satisfies $t_{\mathrm{mix}}(\rho_0;\epsilon)=g_1^{-1}\big[\log(|a_1|/\epsilon)+O(1)\big]$. Decreasing
$|a_1|$ within the interval $(0,O(1)]$ lowers the additive constant $g_1^{-1}\log|a_1|$ but leaves the leading rate
$g_1$ unchanged. An approximate warm start therefore constitutes a prefactor effect, saving at most
$g_1^{-1}\log(1/|a_1|)$ in time. 

Conversely, if $a_1(\rho_0)=0$, the slowest surviving relaxation mode is
governed by $g_2$. The mixing time becomes $t_{\mathrm{mix}}(\rho_0;\epsilon)=g_2^{-1}\big[\log(|a_2|/\epsilon)+O(1)\big]$, including a rate jump $g_1\!\to\!g_2$. The asymptotic ratio of mixing times satisfies
$t^{\mathrm{strong}}_{\mathrm{mix}}/t^{\mathrm{generic}}_{\mathrm{mix}}\to g_1/g_2\le1$. This
second advantage scales by a constant factor relative to the first unless $g_2\gg g_1$. Consequently, the rate jump provides a significant acceleration precisely when the slow mode is spectrally isolated. The $O(1)$ terms absorb all norm-conversion constants independent of 
$\epsilon$- and $\rho_0$. 
\end{theorem}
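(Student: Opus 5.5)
The plan is to work directly from the spectral expansion Eq.~\eqref{eq:expand} and to sandwich $\tracenorm{\rho_t-\sig}$ between two expressions. Each expression should be a single exponential in the leading surviving rate times a $\rho_0$-dependent amplitude. Everything else will be absorbed into $\epsilon$-independent constants.

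\textbf{Case $a_1(\rho_0)\neq0$: upper bound.} I would use $\tracenorm{\rho_t-\sig}^2\le\chi^2(\rho_t\|\sig)=\sum_{k\ge1}|a_k|^2e^{-2g_kt}$. I split off the degenerate slow block with rate $g_1$; if $g_1$ is degenerate, $|a_1|$ denotes the norm of the overlap vector. This gives
\[
\chi^2(\rho_t\|\sig)\le e^{-2g_1t}\big(|a_1|^2+e^{-2(g_2-g_1)t}\chi^2(\rho_0\|\sig)\big).
\]
Under the far warm start hypothesis $\chi^2(\rho_0\|\sig)=O(1)$, the bracket tends to $|a_1|^2$ as $t\to\infty$. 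Since $t_{\rm mix}\to\infty$ as $\epsilon\to0$, the bound yields $t_{\rm mix}\le g_1^{-1}[\log(|a_1|/\epsilon)+o(1)]$.

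\textbf{Case $a_1(\rho_0)\neq0$: lower bound.} I would use duality, $\tracenorm{X}\ge|\Tr(\ell_1^\dagger X)|/\|\ell_1\|_\infty$. Bi-orthogonality $\Tr[\ell_j^\dagger r_k]=\delta_{jk}$ then gives
\[
\tracenorm{\rho_t-\sig}\ge|a_1|e^{-g_1t}/\|\ell_1\|_\infty.
\]
In the degenerate case I would test against the normalized combination $\sum_i \bar a_{1,i}\ell_{1,i}/|a_1|$. The constant $\|\ell_1\|_\infty$ depends only on $\Lind$; KMS-normalization bounds it by $\sig_{\min}^{-1/2}$. Hence $t_{\rm mix}\ge g_1^{-1}[\log(|a_1|/\epsilon)-\log\|\ell_1\|_\infty]$.

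Together the two bounds give the first claim, $t_{\mathrm{mix}}(\rho_0;\epsilon)=g_1^{-1}[\log(|a_1|/\epsilon)+O(1)]$. The prefactor remark follows by inspection: changing $|a_1|$ within $(0,O(1)]$ moves only the additive term $g_1^{-1}\log|a_1|$ and leaves the coefficient of $\log(1/\epsilon)$ fixed at $g_1^{-1}$.

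\textbf{Case $a_1(\rho_0)=0$.} The sum now starts at the $g_2$ block, and the identical argument with $g_1\to g_2$, $\ell_1\to\ell_2$ and $g_3$ as the subleading rate gives $g_2^{-1}[\log(|a_2|/\epsilon)+O(1)]$. If also $a_2=0$, the statement should be read with the first nonvanishing mode; I would note this in a remark. The ratio claim follows by dividing the two expressions at fixed $\rho_0$-amplitudes: both are dominated by $\log(1/\epsilon)$, so the ratio tends to $g_1/g_2\le1$.

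\textbf{Main obstacle.} The hard step is making the $O(1)$ genuinely uniform, given the theorem's claim that it is independent of $\epsilon$ and $\rho_0$. The lower bound's $\|\ell_1\|_\infty$ is harmless, but the upper bound's crossover time depends on $\chi^2(\rho_0\|\sig)/|a_1|^2$. So I would state the result as an $\epsilon\to0$ asymptotic at fixed $\rho_0$ and not as a uniform bound. Alternatively, I would quantify the residual by adding $\tfrac12\log(1/\sig_{\min})$, which bounds $\chi^2(\rho_0\|\sig)$ for every input. In that case the $o(1)$ becomes $O(1)$ once $\epsilon\le|a_1|$, and the same issue arises for $a_2$ in the second case. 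I would also flag that the upper bound requires $|a_1|\gtrsim\epsilon$, which is implicit in the limit.
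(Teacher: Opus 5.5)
Your proposal is correct and is essentially the paper's proof. The paper likewise takes the first populated index $m$ and sandwiches $\|\rho_t-\sigma\|_1$ between the H\"older bound $|a_m|e^{-g_mt}/\|\ell_m\|_\infty$ and the $\chi^2$ bound $|a_m|e^{-g_mt}(1+o(1))$; it then reads off $g_m^{-1}[\log(|a_m|/\epsilon)+O(1)]$ and the ratio $g_1/g_2$. Your explicit handling of the degenerate block and your bound $\|\ell_1\|_\infty\le\sigma_{\min}^{-1/2}$ are harmless refinements.

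Your uniformity caveat is well founded, because the paper's $1+o(1)$ also holds only at fixed $\rho_0$. The threshold you quote is slightly off, though. The upper bound becomes $g_1^{-1}[\log(|a_1|/\epsilon)+\log 2]$ uniformly only once $|a_1|\gtrsim\epsilon^{1-g_1/g_2}\,\chi^2(\rho_0\|\sigma)^{g_1/(2g_2)}$, i.e.\ above the exactness threshold of Remark~\ref{rem:exact}, not merely once $\epsilon\le|a_1|$.
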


\begin{proof}
By the bi-orthogonality relation $\Tr[\ell_j^\dagger r_k]=\delta_{jk}$ and Eq.\eqref{eq:expand}, the divergence expands as 
$\chi^2(\rho_t\|\sig)=\sum_{k\ge1}|a_k|^2 e^{-2g_k t}$. Let $m=\min\{k:a_k\neq0\}$ denote the index
of the slowest populated mode. Factoring out this dominant term yields
\[
\chi^2(\rho_t\|\sig)=|a_m|^2 e^{-2g_m t}\Big(1+\textstyle\sum_{k>m}|a_k/a_m|^2 e^{-2(g_k-g_m)t}\Big)
=|a_m|^2 e^{-2g_m t}\big(1+o(1)\big),
\]
since every exponent $g_k-g_m>0$ for $k>m$. The parenthetical sum therefore approaches unity as $t\to\infty$.

This asymptotic scaling sandwiches the trace distance between two exponentials sharing the same rate. Pairing the state deviation with the slow observable yields the strict lower bound:
$\tracenorm{\rho_t-\sig}\ge|\Tr[\ell_m^\dagger(\rho_t-\sig)]|/\|\ell_m\|_\infty
=(|a_m|/\|\ell_m\|_\infty)\,e^{-g_mt}$. The corresponding upper bound follows from the $\chi^2$-divergence inequality 
$\tracenorm{\rho_t-\sig}\le|a_m|e^{-g_mt}(1+o(1))$. Solving both bounds at the target error level $\epsilon$ establishes 
$t_{\mathrm{mix}}(\rho_0;\epsilon)=g_m^{-1}\big[\log(|a_m|/\epsilon)+O(1)\big]$. The $O(1)$ residual is bounded
between $0$ and $\log\|\ell_m\|_\infty$, independent of $\epsilon$ and of $\rho_0$.
If $a_1\neq0$, the index is $m=1$, which recovers the first scaling law. The dependence on $|a_1|$ enters solely through  the additive
term $g_1^{-1}\log|a_1|$. Varying $|a_1|$ over $(0, O(1)]$ leaves the rate fixed at $g_1$ and alters only this constant, acting as a prefactor reduction. If instead $a_1=0$, the slowest populated mode is generically
$m=2$, yielding the rate jump $g_1\to g_2$. Evaluating the ratio of these two mixing times yields:
\[
\frac{t^{\mathrm{strong}}_{\mathrm{mix}}}{t^{\mathrm{generic}}_{\mathrm{mix}}}
=\frac{g_2^{-1}\log(|a_2|/\epsilon)}{g_1^{-1}\log(|a_1|/\epsilon)}\xrightarrow[\epsilon\to0]{}\frac{g_1}{g_2}\le1,
\]
where the logarithmic terms agree to leading order in $\epsilon$. This ratio remains bounded below by $1/2$ unless
$g_2\ge 2g_1$. This demonstrates that the rate jump provides a substantial speedup precisely under special isolation, where $g_2\gg g_1$.
\end{proof}

We analyze this behavior geometrically as a shift versus a tilt on a log-distance-versus-time plot. An approximate warm start merely shifts the convergence curve downward, maintaining the original final slope $g_1$ from a lower starting coordinate. Exact cancellation tilts the trajectory, steepening the final asymptotic slope to $g_2$. While a shift saves a fixed constant, a tilt modifies the asymptotic scaling. The tilt is considerably more powerful, but its physical impact requires sufficient spectral separation between $g_2$ and $g_1$ to manifest.

\begin{remark}[Exactness criterion]\label{rem:exact}
A non-zero overlap coefficient $a_1$ ceases to dominate this late-time mixing behavior once the relaxation timeline of the first mode falls below that of the second mode. This transition occurs when the initial overlap satisfies the scaling threshold \(\vert{}a_1\vert{} \lesssim a_1^\star := \epsilon^{1-g_1/g_2} \vert{}a_2\vert{}^{g_1/g_2}\). Under strong spectral separation where \(g_1 \ll g_2\), this stability threshold simplifies to \(a_1^\star \approx \epsilon\). Capturing the true rate jump therefore requires suppressing the initial slow-mode overlap down to the target precision tolerance. This scaling requirement explains why generic warm starts yield only a prefactor improvement, and underscores why achieving a genuine rate jump demands exact a priori knowledge of the slow observable \(\ell _{1}\).
\end{remark}

For a generic reversible sampler, exact cancellation requires explicit knowledge of \(\ell _{1}\). This operator is the slowest eigenvector of a superoperator acting on a space of dimension \(d^2 = 4^n\). Computing this eigenvector is generically an exponential task, making it as difficult as the sampling problem itself. The scaling laws proved above establish exactly when an exact initialization strategy becomes viable. Theorem~\ref{prop:dichotomy}  and Remark~\ref{rem:exact} together fix the precise domain of validity for the strategy. Specifically, two conditions must be satisfied simultaneously. First, the slow mode must be spectrally isolated such that \(g_2 \gg g_1\). Second, the operator \(\ell _{1}\) must be accessible at a low computational cost. The central observation of this section is that both conditions are simultaneously satisfied under the single structural constraint of a weakly broken symmetry.

\section{Symmetry gives the bottleneck eigenvector for free}\label{sec:symsec}

\begin{definition}[Weakly broken strong symmetry]
I define the total system dynamics via the perturbed Liouvillian $\Lind=\Lind_0+\eta\,\Lind_1$. Here, the unperturbed channel $\Lind_0$ acts as a reversible sampler possessing
an exact \emph{strong symmetry}. This symmetry is generated by a Hermitian charge $Q$ in the kernel of the adjoint operator, $\Lind_0^\dagger \Qch=0$, while an $O(1)$ spectral gap isolates the remaining relaxation modes. The term $\eta\,\Lind_1 (\eta \ll 1)$ introduces a weak, reversible perturbation that explicitly breaks the
conservation of $Q$ but maintains the identity as a steady state, satisfying   $\Lind_1^\dagger\mathbb{I}=0$.

\end{definition}

\begin{theorem}[The slow mode is the charge]
\label{prop:symmetry}
For small $\eta$, the perturbed Liouvillian $\Lind$ has an isolated slow mode of rate
\[
g_1=\eta\,q+O(\eta^2),\qquad
q=-\frac{\inner{\Qch}{\Lind_1^\dagger \Qch}_\sig}{\normsig{\Qch}^2}\ge0.
\]
The bulk rates remain $g_{k\ge2}=O(1)$. This separation of timescale establishes a wide spectral ratio $g_2/g_1=O(1/\eta)$. The corresponding slow left eigenvector tracks the charge via $\ell_1=\Qch+O(\eta)$. To leading order, the slowest mixing mode is the symmetry charge.
Consequently, the exact-projection condition Eq.\eqref{eq:a1} reduces to the leading-order expectation matching requirement:
\begin{equation}
{\ \langle \Qch\rangle_{\rho_0}=\langle \Qch\rangle_\sig\ }.
\label{eq:warmcond}
\end{equation}
This match leaves a tiny residual overlap bounded by $a_1(\rho_0)=O(\eta)$.
\end{theorem}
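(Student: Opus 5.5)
The plan is nondegenerate Kato perturbation theory for the self-adjoint operator $\Lind^\dagger=\Lind_0^\dagger+\eta\Lind_1^\dagger$ on the KMS Hilbert space. I will work on the traceless subspace $\id^\perp$ (KMS-orthogonal complement of $\id$). This subspace is invariant because $\Lind_0^\dagger\id=\Lind_1^\dagger\id=0$ and both operators are KMS-self-adjoint. Implicitly I assume $\Qch$ has been centered, $\Qch\mapsto \Qch-\langle \Qch\rangle_\sig\id$. Centering changes nothing, since $\Lind_0^\dagger$ still kills it, and it places $\Qch\in\id^\perp$. By the definition of a weakly broken strong symmetry, $\Lind_0^\dagger$ restricted to $\id^\perp$ has the one-dimensional kernel $\mathrm{span}\{\Qch\}$, and the rest of its spectrum lies in $(-\infty,-c]$ for some $c=O(1)$. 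The perturbation is bounded, since everything is finite-dimensional and $\|\Lind_1^\dagger\|=O(1)$.

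I will then invoke standard analytic perturbation theory for an isolated simple eigenvalue of a self-adjoint family, using the Riesz projector on a contour of radius $c/2$ around $0$. For $\eta\|\Lind_1^\dagger\|<c/2$ this gives three facts:
\begin{itemize}
\item exactly one eigenvalue $-g_1(\eta)$ of $\Lind^\dagger|_{\id^\perp}$ lies inside the contour, and it is analytic in $\eta$;
\item its eigenvector satisfies $\ell_1=\Qch/\normsig{\Qch}+O(\eta)$;
\item by Weyl's inequality all other eigenvalues stay $\le -c+\eta\|\Lind_1^\dagger\|$, so $g_{k\ge2}=O(1)$.
\end{itemize}
The first-order coefficient is the Rayleigh quotient $-g_1=\eta\inner{\Qch}{\Lind_1^\dagger \Qch}_\sig/\normsig{\Qch}^2+O(\eta^2)$, which gives the stated $q$. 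Nonnegativity, $q\ge0$, holds because reversibility of $\Lind_1$ makes $\Lind_1^\dagger$ KMS-self-adjoint with nonpositive spectrum, so its quadratic form is $\le0$. From these bounds $g_2/g_1=O(1)/(\eta q)=O(1/\eta)$, provided $q>0$. If $q=0$ the second-order term governs; I will note this as a degenerate case rather than treat it separately.

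For the warm-start condition I use Eq.~\eqref{eq:a1}: $a_1(\rho_0)=\Tr[\ell_1^\dagger(\rho_0-\sig)]$. Substituting $\ell_1=\Qch/\normsig{\Qch}+R$ with $\|R\|_\infty=O(\eta)$ gives
\[
a_1=\frac{\langle\Qch\rangle_{\rho_0}-\langle\Qch\rangle_\sig}{\normsig{\Qch}}+\Tr[R^\dagger(\rho_0-\sig)].
\]
The first term vanishes under Eq.~\eqref{eq:warmcond}. The second is bounded by $\|R\|_\infty\tracenorm{\rho_0-\sig}\le 2\|R\|_\infty=O(\eta)$.

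The main obstacle is making the $O(\eta)$ eigenvector error uniform in the operator norm $\|\cdot\|_\infty$ rather than only in the KMS norm. The perturbation series naturally controls $\normsig{R}\le \eta\|\Lind_1^\dagger\|/(c/2)$, and converting to $\|\cdot\|_\infty$ costs a factor of order $\sig_{\min}^{-1/2}$. I would first handle this by absorbing it into the $O(\cdot)$ as a fixed-size constant, as the paper does for norm-conversion constants. Alternatively, I would bound $a_1$ directly through Cauchy--Schwarz in the dual $\chi^2$ pairing, $|\Tr[R^\dagger(\rho_0-\sig)]|\le\normsig{R}\,\chi(\rho_0\|\sig)$. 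This gives $O(\eta)$ for any far warm start with $\chi^2=O(1)$.
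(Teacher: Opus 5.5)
Your proposal is correct and follows the paper's own argument. The shared steps are nondegenerate Kato perturbation theory for the KMS-self-adjoint family $\Lind_0^\dagger+\eta\Lind_1^\dagger$ restricted to $\id^\perp$, the Rayleigh-quotient first-order rate, dissipativity for $q\ge0$, an $O(\eta)$ shift of the bulk, and substitution of $\ell_1=\Qch+O(\eta)$ into Eq.~\eqref{eq:a1}. Your extra care tightens three points the paper leaves implicit: you normalize $\ell_1$ to $\Qch/\normsig{\Qch}$, you note that $g_2/g_1=\Theta(1/\eta)$ needs $q>0$, and you bound the residual by $\normsig{R}\,\chi(\rho_0\|\sig)$ instead of silently converting a KMS-norm eigenvector error into an operator-norm one.
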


\begin{proof}
Both $\Lind_0^\dagger$ and $\Lind_1^\dagger$ are self-adjoint in 
$\inner{\cdot}{\cdot}_\sig$ due to reversibility. The condition $\Lind^\dagger\mathbb{I}=0$ holds for every $\eta$
because $\sig$ stays stationary. The identity $\mathbb{I}$ is an exact eigenvector at $0$ throughout. 
I restrict to its orthogonal complement $\mathbb{I}^\perp$, the $\sig$-mean-zero sector. 
On $\mathbb{I}^\perp$ the strong symmetry makes $0$ a simple eigenvalue of $\Lind_0^\dagger$.  The centered eigenvector $Q$ satisfies $\inner{Q}{\mathbb{I}}_\sig=0$. A gap $\Delta_0=O(1)$ isolates 0 from the rest of the spectrum. The eigenvalue is isolated. Kato analytic perturbation theory applies to $\Lind^\dagger=\Lind_0^\dagger+\eta\,\Lind_1^\dagger$. 

For small $\eta$ there is an analytic
eigenvalue branch $-g_1(\eta)$ and eigenvector $\ell_1(\eta)$. The first-order data read
\[
g_1(\eta)=-\eta\,
\frac{\inner{Q}{\Lind_1^\dagger Q}_\sig}{\normsig{Q}^2}+O(\eta^2)=\eta q+O(\eta^2),
\qquad
\ell_1(\eta)=Q-\eta\,R_0\,\Lind_1^\dagger Q+O(\eta^2).
\]
The operator $R_0$ is the reduced resolvent of $\Lind_0^\dagger$ on $\{\mathbb{I},Q\}^\perp$. Both operators
are real. The eigenvalue simple. The branch is real for small $\eta$. Dissipativity $\Lind^\dagger\le0$ forces $\eta\inner{Q}{\Lind_1^\dagger Q}_\sig\le0$. This condition yields $q\ge0$. 

The remaining eigenvalues shift by $O(\eta)$ from their $O(1)$ values. I find  $g_{k\ge2}=O(1)$ and
$g_2/g_1=O(1/\eta)$. This gives the claimed rate and the eigenvector. 

For the warm-start condition, substitute $\ell_1=Q+O(\eta)$ into the initial overlap Eq.\eqref{eq:a1}. This step yields 
$a_1(\rho_0)=\Tr[Q^\dagger(\rho_0-\sig)]+O(\eta)=\big(\langle Q\rangle_{\rho_0}-\langle Q\rangle_\sig\big)+O(\eta)$.
To leading order this vanishes exactly when Eq.\eqref{eq:warmcond} holds.
\end{proof}

The identification of the charge as the slow mode rests on a simple physical mechanism.
A strong symmetry makes $Q$ exactly conserved. 
It creates an infinitely slow zero mode within the unperturbed dynamics. Introducing weak symmetry breaking of strength $\eta$
lifts this zero mode to a small decay rate $g_1=O(\eta)$. Being far below the $O(1)$ bulk, this perturbation alters the underlying eigenvector only slightly. The slowest-mixing observable is still $Q$ to leading
order. This almost-conserved quantity represents the exact feature that the dynamics takes longest to equilibrate. 
The presence of the symmetry identifies the slow left eigenvector $\ell_1$ directly without requiring any explicit calculation.

The expectation matching condition Eq.~\eqref{eq:warmcond} provides  the entire prescription for accelerating equilibration.
Preparing an initial state whose charge expectation matches the thermal value projects out this slowest mode completely. 
The sampler subsequently mixes at the fast bulk rate. 
No spectral computation is required. 
The identity of the symmetry is the sole necessary input. 
The symmetry replaces the need for spectral estimation by identifying the slow observable strictly from the algebraic structure. 
Projecting out this slow mode becomes a straightforward condition imposed on the input state rather than a complex quantity to calculate. The subsequent sections carry this identical mechanism to the non-Abelian setting. there is a single charge expectation constraint generalizes to a group invariance condition. The achievable speedups form a lattice structure.

\begin{corollary}[Speedup from a symmetry warm start]\label{cor:warmalg}
Given a weakly broken symmetry with charge $Q$, the warm start expectation matching condition $\langle Q\rangle_{\rho_0}=\langle Q\rangle_\sig$ projects out the slow mode. 
This preparation requires no spectral estimation. 
The mixing time drops to $t_{\mathrm{mix}}=O\! \big(g_2^{-1}\log(1/\epsilon)\big)=O(\log(1/\epsilon))$
when the bulk rate satisfies $g_2=\Theta(1)$. 
This reduction yields an $\Omega(1/\eta)=\Omega(g_2/g_1)$ speedup over a generic initialization. 
The initialization incurs on $O(1)$ preparation overhead. 
For a non-Abelian group $G$, the equivalent requirement becomes full $G$-invariance, expressed as $\mathsf{A}_G(\rho_0)=0$. 
The subgroup-invariant warm starts subsequently yield the structured speedup cascade, to be described below in Theorem~\ref{prop:nonabelian}.

\end{corollary}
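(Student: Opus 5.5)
The plan is to chain Theorem~\ref{prop:symmetry} into Theorem~\ref{prop:dichotomy}, after dealing with the $O(\eta)$ residual that the expectation-matching condition leaves behind.

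\textbf{Step 1: spectral input.} Theorem~\ref{prop:symmetry} gives $g_1=\eta q+O(\eta^2)$, $g_{k\ge2}=\Theta(1)$ and $\ell_1=Q+O(\eta)$. The matched input, with $\langle Q\rangle_{\rho_0}=\langle Q\rangle_\sig$, then has overlap $a_1(\rho_0)=\Tr[(\ell_1-Q)^\dagger(\rho_0-\sig)]$. I bound this by $\|\ell_1-Q\|_\infty\,\tracenorm{\rho_0-\sig}=O(\eta)$. This uses that the Kato first-order correction $-\eta R_0\Lind_1^\dagger Q$ is bounded in operator norm once $\|R_0\|=O(1/\Delta_0)$ and $\Lind_1$, $Q$ are $O(1)$.

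\textbf{Step 2: the main obstacle.} The overlap vanishes only to leading order. Remark~\ref{rem:exact} says a nonzero $a_1$ still dominates the late-time rate once $\epsilon<a_1^\star$, so as stated the corollary cannot hold uniformly as $\epsilon\to0$. I would handle this in two ways.

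\emph{(a) Finite-precision regime.} I would not use the asymptotic form of Theorem~\ref{prop:dichotomy}. Instead I bound directly
\[
\tracenorm{\rho_t-\sig}^2\le\chi^2(\rho_t\|\sig)\le|a_1|^2e^{-2g_1t}+\chi^2(\rho_0\|\sig)\,e^{-2g_2t}.
\]
Under the warm-start assumption $\chi^2(\rho_0\|\sig)=O(1)$, this gives $t_{\mathrm{mix}}=O(g_2^{-1}\log(1/\epsilon))$ for all $\epsilon\gtrsim a_1^\star\approx|a_1|=O(\eta)$. Note that $a_1^\star\approx\epsilon$ requires $g_1\ll g_2$, which holds here.

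\emph{(b) Exact cancellation at every precision.} Here I need $a_1=0$ exactly. When the broken-symmetry structure forces it, I match $\langle\ell_1\rangle$ rather than $\langle Q\rangle$, using the perturbatively corrected charge $Q-\eta R_0\Lind_1^\dagger Q$ to the required order. More cleanly, in the group setting I take $\rho_0$ $G$-invariant. Schur's lemma then kills its overlap with every nontrivial-irrep slow mode exactly, to all orders in $\eta$, because the perturbed slow eigenspace stays a $G$-module by covariance of $\Lind_0$ and $\Lind_1$. This is exactly the statement $\mathsf{A}_G(\rho_0)=0$.

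\textbf{Step 3: speedup ratio.} A generic input has $a_1=\Theta(1)$, so the first case of Theorem~\ref{prop:dichotomy} gives $t_{\mathrm{mix}}=\Theta(g_1^{-1}\log(1/\epsilon))$. Dividing by the warm-start time gives the ratio $g_2/g_1=\Omega(1/\eta)$.

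\textbf{Step 4: preparation overhead and the non-Abelian clause.} The $O(1)$ overhead claim I would justify by exhibiting a preparation, such as mixing over charge sectors or Haar twirling, whose cost is independent of $\eta$. The remaining non-Abelian and cascade sentences I would defer to Theorem~\ref{prop:nonabelian}, noting only that the Schur argument in Step 2(b) already supplies the $G$-invariance criterion.
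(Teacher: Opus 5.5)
Your backbone is the same as the paper's. The corollary has no separate proof there; it follows by feeding Theorem~\ref{prop:symmetry} ($g_2/g_1=O(1/\eta)$, $\ell_1=Q+O(\eta)$) into the exact-cancellation branch of Theorem~\ref{prop:dichotomy}. Where you differ is in the residual. The paper treats the leading-order match $\langle Q\rangle_{\rho_0}=\langle Q\rangle_\sig$ as $a_1=0$, although Theorem~\ref{prop:symmetry} only gives $a_1=O(\eta)$. By Theorem~\ref{prop:dichotomy}, any nonzero $a_1$ restores the rate $g_1$ as $\epsilon\to0$, as Remark~\ref{rem:exact} itself warns. You are right that the statement cannot hold uniformly in $\epsilon$. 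Your split bound $\chi^2(\rho_t\|\sig)\le|a_1|^2e^{-2g_1t}+\chi^2(\rho_0\|\sig)e^{-2g_2t}$ is the correct repair: it replaces the asymptotic dichotomy with a finite-time estimate and gives a true statement on the window $C\eta\lesssim\epsilon\lesssim1$, at the cost of the $\epsilon\to0$ claim. For the non-Abelian clause, your use of Lemma~\ref{lem:cov} is stronger than the paper's proof of Theorem~\ref{prop:nonabelian}, which carries $O(\eta)$ errors. The slow spectral subspace of $\Lind^\dagger$ is a $G$-module at every $\eta$, so a $G$-invariant input has exactly zero overlap with every nontrivial-irrep slow mode. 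Under Definition~\ref{def:nonabelian}, which excludes trivial-irrep slow modes such as the spin-$0$ scalar of Section~\ref{sec:example}, the jump therefore holds at every precision.

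Three points need tightening.

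\emph{First}, exact nulling by symmetry does not reach the single-charge case. $Q$ commutes with the $U(1)$ it generates, so it lies in the trivial isotypic component, and a twirled input can have any value of $\langle Q\rangle$. Haar twirling therefore cannot do the Abelian matching in Step~4. What does work is a sector mixture tuned to $\langle Q\rangle_\sig$, which presupposes knowing that number. Matching the corrected charge $Q-\eta R_0\Lind_1^\dagger Q$ only pushes the residual to $O(\eta^2)$, and it consumes spectral data of $\Lind_0$. In the Abelian case, (a) is the whole result.

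\emph{Second}, $\|R_0\|\le1/\Delta_0$ holds in the KMS norm, not in operator norm. Bound the residual in the KMS pairing instead: $|a_1|=|\inner{\ell_1-Q}{\sig^{-1/2}(\rho_0-\sig)\sig^{-1/2}}_\sig|\le\normsig{\ell_1-Q}\,\chi^2(\rho_0\|\sig)^{1/2}$. This uses exactly the warm-start hypothesis you already need in (a).

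\emph{Third}, Step~3 divides an $\epsilon\to0$ generic mixing time by a warm-start time that is valid only for $\epsilon\gtrsim\eta$. Use instead the all-time lower bound $\tracenorm{\rho_t-\sig}\ge|a_1|e^{-g_1t}/\|\ell_1\|_\infty$ from the proof of Theorem~\ref{prop:dichotomy}. That gives the ratio $\Theta(g_2/g_1)$ on the whole window.
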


\section{The non-Abelian asymmetry cascade}\label{sec:nonabelian}

The single charge theorem above projects out a single conserved quantity. A compact non-Abelian symmetry group $G$ expands the slow manifold dimensions. The correct initialization target shifts from a single expectation value to an explicit 'asymmetry' measure.  Let $g\mapsto U_g$
define a unitary representation of $G$ acting on the Hilbert space $\Hilb$. The conjugation action operates via 
$\mathcal{U}_g(X)=U_g X U_g^\dagger$. The Haar-average $\mathcal{G}=\int_G\mathrm{d}g\,\mathcal{U}_g$ forms the orthogonal projection onto the $G$-invariant sub-algebra. The 'asymmetry' of a state is
the relative entropy calculated with respect to its own Haar-average image
\[
\mathsf{A}_G(\rho)=S\big(\rho\,\big\|\,\mathcal{G}(\rho)\big)=S(\mathcal{G}\rho)-S(\rho)\ \ge 0. 
\]
This informational asymmetry measure vanishes exactly when the state $\rho$ exhibits full $G$-invariance~\cite{MarvianSpekkens,AresMurcianoCalabrese}.

\begin{definition}[Weakly broken strong $G$-symmetry]\label{def:nonabelian}
Let $\Lind=\Lind_0+\eta\Lind_1$ define the total perturbed generator. The unperturbed component $\Lind_0$ is a reversible operator satisfying KMS--detailed-balanced with
respect to a $G$-invariant state $\sig$ ($\mathcal{U}_g\sig=\sig$). The unperturbed operator $\Lind_0$ carry a
'strong' $G$-symmetry. Every constituent jump operator is invariant under the $G$ action. This implies $[\mathcal{U}_g,\Lind_0]=0$. Every separate $G$-tensor charge operator is conserved. 
Assume $\Lind_0$ is $G$-ergodic. Evaluate the system within $\mathbb{I}^\perp$. The kernel consists precisely of the linear span of a finite family of charges $\{Q_{\lambda,i}\}$. These charges transform under nontrivial irreducible representations labelled by $\lambda\in\Lambda$ with internal components running over $i=1,\dots,d_\lambda$. A substantial spectral gap scaled at $O(1)$ isolates these kernel modes from the remaining relaxation spectrum. The perturbation $\eta\Lind_1$ is a reversible operator preserving the stationarity of $\sig$. It is 
$G$-covariant, satisfying $[\mathcal{U}_g,\Lind_1]=0$. It breaks the conservation laws while preserving the underlying group symmetry properties.
\end{definition}

The Abelian Definition of Section~\ref{sec:symsec} is the case $G=\mathrm{U}(1)$,
$\Lambda=\{1\}$, $Q_1=Q$.

\begin{lemma}[Covariant generators are irreducible-representation-block-diagonal]\label{lem:cov}
Let $\Lind$ be a reversible, $G$-covariant generator. Under the conjugation action $\mathcal{U}_g$, 
the operator space decomposes into $G$-isotypic components,
$\mathcal{B}(\Hilb)=\bigoplus_{\lambda}\mathcal{B}_\lambda$. The adjoint operator $\Lind^\dagger$ preserves each component $\mathcal{B}_\lambda$. On the multiplicity space of a fixed irreducible representation, $\Lind^\dagger$ commutes with the irreducible representation action. Consequently, every
eigen-operator of $\Lind$ belongs to a definite irreducible representation. All operators within the same irreducible representation multiplet decay at
the same rate $r_\lambda$. Asymmetric observables belonging to the nontrivial irreducible representation operators never
mix with the symmetric observables belonging to the trivial-irreducible-representation sector, which forms \ the commutant of $\{U_g\}$).
\end{lemma}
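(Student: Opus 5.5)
The plan is to work entirely in the Heisenberg picture on $\mathcal{B}(\Hilb)$ with the KMS inner product \eqref{eq:kms}. The first task is to check that the conjugation action is compatible with this inner product. Since $\sig$ is $G$-invariant, $U_g$ commutes with $\sig^{1/2}$. A one-line trace computation then shows that each $\mathcal{U}_g$ is unitary for $\inner{\cdot}{\cdot}_\sig$:
\[
\inner{\mathcal{U}_gX}{\mathcal{U}_gY}_\sig=\Tr\big(\sig^{1/2}U_gX^\dagger U_g^\dagger\sig^{1/2}U_gYU_g^\dagger\big)=\inner{X}{Y}_\sig .
\]
Consequently $g\mapsto\mathcal{U}_g$ is a unitary representation of the compact group $G$ on the finite-dimensional Hilbert space $(\mathcal{B}(\Hilb),\inner{\cdot}{\cdot}_\sig)$. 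Peter--Weyl, or simply complete reducibility, gives the orthogonal isotypic decomposition $\mathcal{B}(\Hilb)=\bigoplus_\lambda\mathcal{B}_\lambda$. Each $\mathcal{B}_\lambda$ is the range of the character projector $P_\lambda=d_\lambda\int_G\mathrm{d}g\,\overline{\chi_\lambda(g)}\,\mathcal{U}_g$. For the trivial irrep, $P_0=\mathcal{G}$, and its range is the commutant of $\{U_g\}$.

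Next I transfer covariance to the adjoint. The hypothesis is $[\mathcal{U}_g,\Lind]=0$. Taking trace duality, and using $\mathcal{U}_g^\dagger=\mathcal{U}_{g^{-1}}$ for the Hilbert--Schmidt adjoint, gives $[\mathcal{U}_g,\Lind^\dagger]=0$ for all $g$. Since $P_\lambda$ is an integral of the $\mathcal{U}_g$, it follows that $\Lind^\dagger$ commutes with every $P_\lambda$. Hence $\Lind^\dagger$ preserves each $\mathcal{B}_\lambda$, which is the block-diagonal claim. In particular the trivial sector never mixes with the nontrivial sectors.

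For the statement about multiplicity spaces, I write $\mathcal{B}_\lambda\cong V_\lambda\otimes\mathbb{C}^{m_\lambda}$, with $G$ acting on the first factor. A linear map commuting with $\mathcal{U}_g$ on this space has the form $\id_{V_\lambda}\otimes M_\lambda$ by Schur's lemma. So $\Lind^\dagger|_{\mathcal{B}_\lambda}=\id\otimes M_\lambda$. Self-adjointness of $\Lind^\dagger$ under $\inner{\cdot}{\cdot}_\sig$, together with orthogonality of the decomposition, makes $M_\lambda$ Hermitian on the multiplicity space with nonpositive spectrum. Diagonalizing $M_\lambda$ with eigenvalues $-r_{\lambda,j}$ yields an orthonormal eigenbasis of $\Lind^\dagger$ of the form $e_i\otimes v_j$. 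Each such eigen-operator lies in a definite irrep, and all $d_\lambda$ components of a multiplet share the rate $r_{\lambda,j}$. The Schrödinger-picture statement follows by applying $\Gamma_\sig$. This map commutes with $\mathcal{U}_g$ because $\sig$ is invariant, so $r_k=\Gamma_\sig(\ell_k)$ stays in the same irrep.

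The main obstacle is bookkeeping rather than mathematics. I must make sure Schur's lemma is applied in the KMS geometry, where orthogonality and self-adjointness hold, and not in the Hilbert--Schmidt geometry; this is why the unitarity check in the first step comes first. I also need to state precisely that degenerate eigenspaces of $\Lind^\dagger$ may mix different irreps when rates accidentally coincide. The claim "every eigen-operator belongs to a definite irrep" should therefore be read as "an eigenbasis can be chosen irrep-adapted." Any eigenvector is then a sum of irrep-pure eigenvectors with the same rate, obtained by applying the $P_\lambda$, since these commute with $\Lind^\dagger$.
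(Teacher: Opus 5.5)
Your proposal is correct and follows the paper's route. Covariance makes $\Lind^\dagger$ an intertwiner of the conjugation action, Schur's lemma gives the block form $\id\otimes M_\lambda$ on each $\mathcal{B}_\lambda$, and $G$-invariance of $\sig$ makes the isotypic decomposition $\inner{\cdot}{\cdot}_\sig$-orthogonal, so that KMS self-adjointness yields an irrep-adapted eigenbasis; your added checks (KMS-unitarity of $\mathcal{U}_g$, transfer of covariance to $\Lind^\dagger$, the Schr\"odinger picture via $\Gamma_\sig$) are sound. Your caveat that only an eigenbasis, not every eigen-operator, is irrep-pure when rates coincide across irreps correctly sharpens the paper's claim that the spectral decomposition refines the isotypic one: the kernel of $\Lind_0^\dagger$ in the paper's own setting contains several multiplets $\lambda$, so it mixes irreps and shows the literal claim fails.
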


\begin{proof}
Covariance $[\mathcal{U}_g,\Lind]=0$ establishes that $\Lind^\dagger$ acts as an intertwiner of the representation
$\mathcal{U}$ carried by $\mathcal{B}(\Hilb)$. 
By Schur's lemma, an intertwiner is block-diagonal across inequivalent isotypic components. 
On the multiplicity space of a fixed irreducible representation, the intertwiner commutes with the irreducible action. 
The intertwiner acts as a scalar on each irreducible copy. 
It also acts as an arbitrary operator across copies of the {same} irreducible representation. 
Reversibility ensures $\Lind^\dagger$ is self-adjoint under the KMS iner product Eq.\eqref{eq:kms}.
As $\sig$ is $G$-invariant, the isotypic decomposition is $\inner{\cdot}{\cdot}_\sig$-orthogonal. 
The spectral decomposition therefore refines the isotypic
one. 
This places each eigenoperator inside a single $\mathcal{B}_\lambda$.
\end{proof}

Lemma~\ref{lem:cov} is exact. It avoids perturbation theory entirely. It forms the structural backbone of the
cascade. The relation holds for $\Lind=\Lind_0+\eta\Lind_1$ at every value of $\eta$. Slow modes carry a definite
irreducible representation across the entire spectrum. The perturbation merely determines their numerical rates. This block-diagonalization of a
covariant detailed-balanced spectrum constitutes the standard Schur decomposition. I use it for the mechanism that converts initialization into
representation theory via the asymmetry target and the cascade of Section~\ref{sec:nonabelian}.

\begin{lemma}[Schur splitting of the slow rates]\label{lem:schur}
Under Definition~\ref{def:nonabelian}, the degenerate kernel
$\mathcal{Q}=\bigoplus_{\lambda\in\Lambda}V_\lambda$ of $\Lind_0^\dagger$ on $\mathbb{I}^\perp$ lifts to
first order in $\eta$ to irreducible representation-labelled rates
\[
g_\lambda=\eta\,q_\lambda+O(\eta^2),\qquad
q_\lambda=-\frac{\inner{Q_{\lambda,i}}{\Lind_1^\dagger Q_{\lambda,i}}_\sig}{\normsig{Q_{\lambda,i}}^2}\ge 0. 
\]
These rates are independent of $i\in\{1,\dots,d_\lambda\}$ within an irreducible representation. The bulk stays $O(1)$.
\end{lemma}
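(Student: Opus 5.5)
The plan is to run degenerate (Kato) perturbation theory on the isolated zero eigenvalue of $\Lind_0^\dagger$ restricted to $\mathbb{I}^\perp$, and to let the symmetry dictate the first-order splitting through Lemma~\ref{lem:cov}. First I will record the exact facts that hold for every $\eta$. Since $\Lind^\dagger\mathbb{I}=0$ and $\Lind^\dagger$ is KMS-self-adjoint, $\mathbb{I}^\perp$ is invariant under $\Lind^\dagger$. By Definition~\ref{def:nonabelian}, $0$ is an eigenvalue of $\Lind_0^\dagger|_{\mathbb{I}^\perp}$ with eigenspace $\mathcal{Q}=\bigoplus_\lambda V_\lambda$, where $V_\lambda=\mathrm{span}\{Q_{\lambda,i}\}_i$, and it is separated from the rest by a gap $\Delta_0=O(1)$. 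Kato's theory for the self-adjoint analytic family $\Lind_0^\dagger+\eta\Lind_1^\dagger$ then gives a total spectral projection $P(\eta)$ onto the $\dim\mathcal{Q}$ eigenvalues near $0$. Because $P(\eta)$ is a Riesz integral of $G$-covariant resolvents, it is analytic and $G$-covariant. Its first-order data is governed by the reduced operator $P_0\Lind_1^\dagger P_0$ on $\mathcal{Q}$, where $P_0$ is the $\sig$-orthogonal projection onto $\mathcal{Q}$. The eigenvalues near $0$ are therefore $\eta\mu+O(\eta^2)$, with $\mu$ ranging over the spectrum of $P_0\Lind_1^\dagger P_0$. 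The remaining eigenvalues move by at most $\eta\|\Lind_1^\dagger\|$ by Weyl's inequality, so the bulk stays $O(1)$.

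Next I will diagonalize $P_0\Lind_1^\dagger P_0$ using the symmetry. $\mathcal{U}_g$ preserves $\sig$, so it is KMS-unitary, and it commutes with $\Lind_0^\dagger$. Hence it preserves $\ker\Lind_0^\dagger$, and $P_0$ commutes with $\mathcal{U}_g$. Since $\Lind_1^\dagger$ is also covariant, the reduced operator is an intertwiner on $\mathcal{Q}$. By Schur's lemma, exactly as in Lemma~\ref{lem:cov}, it preserves each isotypic block $V_\lambda$ and acts there as a scalar $-q_\lambda$. This holds provided each $\lambda\in\Lambda$ appears in $\mathcal{Q}$ with multiplicity one, which is how I read the labelling $\{Q_{\lambda,i}\}$ with a single index set per $\lambda$. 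The scalar can then be read off from any component: $-q_\lambda=\inner{Q_{\lambda,i}}{\Lind_1^\dagger Q_{\lambda,i}}_\sig/\normsig{Q_{\lambda,i}}^2$. The right-hand side is independent of $i$ because the Rayleigh quotient of a scalar is that scalar. It is real because $\Lind_1^\dagger$ is self-adjoint. This gives $g_\lambda=\eta q_\lambda+O(\eta^2)$ with a $d_\lambda$-fold degeneracy.

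For the sign $q_\lambda\ge0$ I will reuse the argument of Theorem~\ref{prop:symmetry}. For every $\eta\ge0$ the generator $\Lind^\dagger$ is nonpositive, and $\inner{Q}{\Lind_0^\dagger Q}_\sig=0$ for $Q\in\mathcal{Q}$, so $\eta\inner{Q}{\Lind_1^\dagger Q}_\sig=\inner{Q}{\Lind^\dagger Q}_\sig\le0$. Dividing by $\eta>0$ gives $q_\lambda\ge0$.

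The step I expect to be the main obstacle is the possible presence of repeated irreducible representations inside $\mathcal{Q}$. If some $\lambda$ occurs with multiplicity $m_\lambda>1$, Schur's lemma only reduces $P_0\Lind_1^\dagger P_0$ to an $m_\lambda\times m_\lambda$ Hermitian matrix tensored with the identity on the $d_\lambda$-dimensional irrep space. In that case the rates are the eigenvalues of this matrix rather than a single Rayleigh quotient. My plan is to treat multiplicity one as implicit in Definition~\ref{def:nonabelian}, or else to choose the labels $Q_{\lambda,i}$ to be the copies that diagonalize the multiplicity matrix. With that choice the displayed formula holds verbatim and the $i$-independence is unaffected, since the irrep action still commutes with the reduced operator.
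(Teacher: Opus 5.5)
Your proposal is correct and follows essentially the same route as the paper: Kato degenerate perturbation theory on the isolated kernel in $\mathbb{I}^\perp$; Schur's lemma applied to the $G$-intertwining restricted breaking form; dissipativity to get $q_\lambda\ge0$; and an $O(\eta)$ shift of the bulk. The paper phrases the middle step as computing the eigenvalues of $\Gamma^{-1}M$ from the KMS Gram matrix and the breaking form, which is equivalent to your $\sigma$-orthogonal compression $P_0\Lind_1^\dagger P_0$. You also point out that the scalar-per-$V_\lambda$ conclusion needs each irrep to occur in $\mathcal{Q}$ with multiplicity one, or else a choice of copies that diagonalizes the $m_\lambda\times m_\lambda$ multiplicity matrix; the paper's proof uses this assumption silently, so your remark sharpens the argument rather than exposing a gap.
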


\begin{proof}
The kernel is degenerate. First-order shifts come from Kato degenerate perturbation theory. The
$O(\eta)$ rates are the eigenvalues of $\Gamma^{-1}M$. Here, 
$\Gamma_{(\lambda i),(\mu j)}=\inner{Q_{\lambda,i}}{Q_{\mu,j}}_\sig$ is the KMS Gram matrix of the kernel, 
and $M_{(\lambda i),(\mu j)}=-\inner{Q_{\lambda,i}}{\Lind_1^\dagger Q_{\mu,j}}_\sig$ is the breaking form restricted to $\mathcal{Q}$. 
Both $\Gamma$ and $M$ are intertwiners of the $G$-action
on $\mathcal{Q}$. 
The $\Gamma$ intertwines because $\sig$ is $G$-invariant. The $M$ intertwines because $\Lind_1$ is $G$-covariant
(Lemma~\ref{lem:cov}). 
By Schur's lemma,  each matrix is block-diagonal across distinct irreducible representations. 
Each matrix is a scalar on each $V_\lambda$, say $\Gamma|_{V_\lambda}=\gamma_\lambda I$ and $M|_{V_\lambda}=\mu_\lambda I$. The product $\Gamma^{-1}M$ is scalar on $V_\lambda$. It has the single eigenvalue $g_\lambda=\mu_\lambda/\gamma_\lambda$.
This value matches the displayed Rayleigh quotient. 
It is identical for every $i$. 
The branch is real for small $\eta$. The adjoints $\Lind_0^\dagger,\Lind_1^\dagger$ are
self-adjoint. 
The kernel is isolated. Dissipativity $\Lind^\dagger\le0$ forces
$\mu_\lambda\ge0$. 
This condition implies $q_\lambda\ge0$. 
The non-kernel eigenvalues of $\Lind_0^\dagger$ are $O(1)$.
They move by only $O(\eta)$. 
The bulk stays $O(1)$. 
Each slow rate $g_\lambda=O(\eta)$ is spectrally isolated. 
This isolation defines the regime where the rate jump of Theorem~\ref{prop:dichotomy} functions.
\end{proof}

The first-order formula prompts the question of how far it extends: is the linearity in $\eta$ an
artifact of small-$\eta$ asymptotic limit, and do the constants survive as the system grows? Both
questions have sharp answers without of any perturbation theory.

\begin{proposition}[Non-perturbative sandwich for the slow rates]\label{prop:sandwich}
Dress observables by $X\mapsto\sig^{1/4}X\sig^{1/4}$. This mapping defines an isometry from the KMS space onto
the Hilbert--Schmidt space. It maps $-\Lind_i^\dagger$ to a positive semidefinite self-adjoint operator $\mathsf H_i$.
Here, $\mathsf H_i\,\sig^{1/2}=0$ and
$\spec(-\Lind_0-\eta\Lind_1)=\spec(\mathsf H_0+\eta\mathsf H_1)$. 
Let $P$ project onto $\ker\mathsf H_0\ominus\sig^{1/2}$ (the dressed charge multiplets).
Let $g_{\mathrm{bulk}}$ be the smallest nonzero eigenvalue of $\mathsf H_0$.
For each $\lambda\in\Lambda$, let $q_\lambda$ be
the Schur scalar of $P\mathsf H_1P$ on the $\lambda$-multiplet. 
This matches the Rayleigh quotient as in
Lemma~\ref{lem:schur}). 
Let $b_\lambda:=\|(\mathbb{I}-P)\mathsf H_1P_\lambda\|$. 
Then, for every $\eta>0$ satisfying $\eta q_\lambda<g_{\mathrm{bulk}}$,
\[
\eta\,q_\lambda-\frac{\eta^2\,b_\lambda^2}{g_{\mathrm{bulk}}-\eta q_\lambda}
\;\le\;g_\lambda(\eta)\;\le\;\eta\,q_\lambda .
\]
The upper bound holds exactly at all $\eta$. 
Moreover, if $q_\lambda>0$ for all $\lambda$, the full gap
satisfies the global linear bound $\gap(\Lind_0+\eta\Lind_1)\ge\eta\,\gap(\Lind_1)$ for all $\eta\ge0$,
and $\eta\mapsto\gap(\Lind_0+\eta\Lind_1)$ is concave. The linear regime is global, not asymptotic.
\end{proposition}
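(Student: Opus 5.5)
The plan is to work entirely in the dressed Hilbert--Schmidt picture and to reduce every claim to a variational (min-max) or Schur-complement statement about the positive operator $\mathsf H(\eta):=\mathsf H_0+\eta\mathsf H_1$.

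\textbf{Setup.} First I check the dressing. The map $X\mapsto\sig^{1/4}X\sig^{1/4}$ sends $\inner{X}{Y}_\sig$ to $\Tr\big((\sig^{1/4}X\sig^{1/4})^\dagger\sig^{1/4}Y\sig^{1/4}\big)$, so it is a unitary onto Hilbert--Schmidt space. Conjugating the KMS-self-adjoint, nonpositive $-\Lind_i^\dagger$ by it therefore gives self-adjoint $\mathsf H_i\ge0$. Since $\Lind_i^\dagger\id=0$ and $\id\mapsto\sig^{1/2}$, we get $\mathsf H_i\sig^{1/2}=0$. The spectral identity follows because $\Lind$ and $\Lind^\dagger$ share spectra, as recalled in Section~\ref{sec:prelim}.

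Because $\sig$ is $G$-invariant, the dressing commutes with $\mathcal U_g$. Lemma~\ref{lem:cov} then says each $\mathsf H_i$ preserves every isotypic sector $\mathcal B_\lambda$. Schur's lemma gives $P_\lambda\mathsf H_1P_\lambda=q_\lambda P_\lambda$, and this scalar is the Rayleigh quotient of Lemma~\ref{lem:schur}. I define $g_\lambda(\eta)$ as the lowest eigenvalue of $\mathsf H(\eta)$ restricted to $\mathcal B_\lambda$; this is the branch continuing the $\lambda$-multiplet.

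\textbf{Upper bound.} This is pure min-max inside $\mathcal B_\lambda$. Take a dressed charge $x=\tilde Q_{\lambda,i}$. Then $\mathsf H_0x=0$, so
\[
\inner{x}{\mathsf H(\eta)x}/\|x\|^2=\eta q_\lambda ,
\]
which yields $g_\lambda(\eta)\le\eta q_\lambda$ for every $\eta\ge0$, with no smallness assumption.

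\textbf{Lower bound.} I would use a Feshbach/Schur-complement argument on $\mathcal B_\lambda=\operatorname{ran}P_\lambda\oplus\mathcal C$, where $\mathcal C:=(\id-P)\mathcal B_\lambda$. The block form is
\[
\begin{pmatrix}\eta q_\lambda & \eta B^\dagger\\ \eta B & C(\eta)\end{pmatrix},\qquad B:=(\id-P)\mathsf H_1P_\lambda,\qquad C(\eta)=\mathsf H_0|_{\mathcal C}+\eta\,(\id-P)\mathsf H_1(\id-P)\ge g_{\mathrm{bulk}} ,
\]
using $\mathsf H_1\ge0$ and that $\mathsf H_0\ge g_{\mathrm{bulk}}$ on $\mathcal C$. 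Let $z=g_\lambda(\eta)$. By the upper bound, $z\le\eta q_\lambda<g_{\mathrm{bulk}}$, so $C(\eta)-z$ is invertible with $\|(C-z)^{-1}\|\le(g_{\mathrm{bulk}}-z)^{-1}$. The Feshbach map then says $z$ is an eigenvalue of the scalar-on-$P_\lambda$ operator $\eta q_\lambda-\eta^2B^\dagger(C-z)^{-1}B$. Bounding the correction by $\eta^2b_\lambda^2/(g_{\mathrm{bulk}}-z)$ and using $g_{\mathrm{bulk}}-z\ge g_{\mathrm{bulk}}-\eta q_\lambda$ gives the stated lower bound.

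\textbf{Main obstacle.} The delicate point is the multiplicity question. If $\lambda$ occurs in the kernel with several copies, $P_\lambda\mathsf H_1P_\lambda$ is only an intertwiner, not a scalar. In that case I would refine $P_\lambda$ to a single eigen-copy of the multiplicity matrix and run the same Feshbach step, taking $q_\lambda$ to be the corresponding Schur scalar. I expect the intended reading of Definition~\ref{def:nonabelian} to be one copy per $\lambda$.

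\textbf{Global statements.} On the fixed subspace $(\sig^{1/2})^\perp$, which is invariant for every $\eta$, we have
\[
\gap(\eta)=\min_{\|x\|=1,\,x\perp\sig^{1/2}}\inner{x}{(\mathsf H_0+\eta\mathsf H_1)x} .
\]
This is a minimum of affine functions of $\eta$, hence concave. Dropping $\mathsf H_0\ge0$ gives $\gap(\eta)\ge\eta\min_x\inner{x}{\mathsf H_1x}=\eta\,\gap(\Lind_1)$.

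The hypothesis $q_\lambda>0$ is what makes this nontrivial. It rules out a common kernel vector of $\mathsf H_0$ and $\mathsf H_1$ inside $\operatorname{ran}P$. Off $\operatorname{ran}P$, $\mathsf H_0$ is already gapped, so the common kernel of $\mathsf H_0$ and $\mathsf H_1$ is just $\sig^{1/2}$. To conclude $\gap(\Lind_1)>0$ one must additionally assume $\Lind_1$ is itself primitive, or else read the bound as vacuous. I would state this explicitly rather than claim it follows from $q_\lambda>0$ alone.
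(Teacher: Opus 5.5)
Your proposal is correct and follows the paper's proof step for step: the $\sig^{1/4}$-dressing, block-diagonalization via Lemma~\ref{lem:cov}, Courant--Fischer with the dressed charges as test vectors for the upper bound, the Schur-complement (Feshbach) equation with $C\succeq g_{\mathrm{bulk}}$ and $\mu\le\eta q_\lambda$ for the lower bound, and concavity of the lowest eigenvalue on $(\sig^{1/2})^\perp$ for the global claim (your step of dropping $\mathsf H_0\ge0$ is the same superadditivity the paper invokes). Your caveat that $q_\lambda>0$ only makes the common kernel trivial and does not give $\gap(\Lind_1)>0$ is a correct sharpening the paper omits; one correction to your multiplicity aside is that refining $P_\lambda$ to a single copy would move the other kernel copies into $\mathcal C$ and destroy $C\succeq g_{\mathrm{bulk}}$, so in that case you should keep the whole $\lambda$-kernel block and apply Weyl's inequality to the matrix-valued Schur complement.
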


\begin{proof}
The isometry and positivity follow the standard vectorization of a KMS-reversible generator. 
The identity $\sig^{1/2}$ is the image of $\mathbb{I}$. 
Lemma~\ref{lem:cov} ensures $\mathsf H(\eta)=\mathsf H_0+\eta\mathsf H_1$
is block-diagonal over the $G$-isotypic components at every $\eta$. The argument proceeds entirely within the
$\lambda$-isotypic block. 
The labels are unambiguous. 
\emph{Upper bound.} 
The block contains the multiplet $\mathcal K_\lambda\subset\ker\mathsf H_0$. 
The quadratic form of $\mathsf H(\eta)$ equals $\eta q_\lambda$ identically on this subspace by Schur's lemma. 
Courant--Fischer using test space $\mathcal K_\lambda$ bounds the lowest $\dim\mathcal K_\lambda$ eigenvalues of the block by $\eta q_\lambda$. 
\emph{Lower bound.} 
Split the block via $P_\lambda$ and $\mathbb{I}-P_\lambda$. In this
decomposition,
\[
\mathsf H(\eta)=\begin{pmatrix}\eta q_\lambda\,\mathbb{I} & \eta B^\dagger\\[2pt] \eta B & C+\eta D\end{pmatrix},
\qquad C\succeq g_{\mathrm{bulk}},\quad D\succeq0,\quad\|B\|\le b_\lambda .
\]
The off-diagonal blocks of $\mathsf H_0$ vanish because $P_\lambda$ projects into its kernel. 
Any eigenvalue $\mu<g_{\mathrm{bulk}}$ of the block satisfies the Schur-complement equation $\mu\in\spec\big(\eta q_\lambda\mathbb{I}-\eta^2B^\dagger(C+\eta D-\mu)^{-1}B\big)$.
The operator bound $0\preceq B^\dagger(C+\eta D-\mu)^{-1}B\preceq b_\lambda^2/(g_{\mathrm{bulk}}-\mu)\,\mathbb{I}$ holds. 
Substituting $\mu\le\eta q_\lambda$ from the upper bound yields the inequality. 
\emph{Global bound.} 
The condition
$q_\lambda>0$ for all $\lambda$ implies $\ker\mathsf H_0\cap\ker\mathsf H_1=\mathrm{span}\{\sig^{1/2}\}$. 
On the orthogonal complement of this common kernel, the smallest eigenvalue of a nonnegative combination is a concave function of the coefficients. 
It is an infimum of affine functions (Lemma~22 of \cite{Onorati}).
Therefore, 
$\gap(\mathsf H_0+\eta\mathsf H_1)\ge\gap(\mathsf H_0)+\eta\,\gap(\mathsf H_1)=\eta\,\gap(\mathsf H_1)$.
The charges lie in the complement and are annihilated by $\mathsf H_0$, so $\gap(\mathsf H_0)=0$ in this convention.
Concavity combined with vanishing at $\eta=0$ makes
$\gap(\eta)/\eta$ non-increasing. The $\eta$-dependence of the bound cannot be improved.
\end{proof}

\begin{remark}[Uniformity in system size]\label{rem:size}
Proposition~\ref{prop:sandwich} reduces the system-size independence of the breaking strength to three static constants: \(q_{\lambda }\), \(b_{\lambda }\), and \(g_{\mathrm{bulk}}\). 
This condition is equivalent to holding \(g_\lambda=\Theta(\eta)\) uniformly in \(n\). 
No dynamical input remains. 
The condition \(g_{\mathrm{bulk}}=\Omega(1)\) states that $\Lind_0$ mixes rapidly within symmetry sectors. 
This defines the target regime of the construction.
For a quasi-local covariant bath and collective charge multiplets, \(q_{\lambda }\) is a ratio of two extensive quadratic forms and \(b_{\lambda }\) collects local contributions against a normalized collective mode. 
Section~\ref{sec:example} contains a worked model where all three constants are \(O(1)\). 
The exact linearity of every slow rate in \(\eta \) observed there reflects the sandwich collapsing onto its upper edge. 
Establishing \(q_\lambda=\Theta(1)\) and \(b_\lambda=O(1)\) for lattice model families requires a purely static form-factor computation. 
This computation represents the most consequential open problem posed by this framework.
\end{remark}

\begin{theorem}[Non-Abelian warm start: the asymmetry cascade]\label{prop:nonabelian}
Order the distinct slow rates $g_{\lambda_1}<g_{\lambda_2}<\dots<O(1)$. Then, as $\epsilon\to0$:
\emph{(i) Full jump.} 
Let $\rho_0$ be $G$-invariant, meaning $\mathsf{A}_G(\rho_0)=0$.
Every slow overlap vanishes. 
The mixing time depends entirely on the symmetric bulk.
This condition creates a full rate jump from
$g_{\lambda_1}$ to $O(1)$, yielding $t_{\mathrm{mix}}=O\!\big(\log(1/\epsilon)\big)$. 
\emph{(ii) Quantitative.} 
In general, surviving slow overlaps satisfy
$\sum_{\lambda,i}|a_{\lambda,i}(\rho_0)|^2\le C\,\mathsf{A}_G(\rho_0)$ with
$C=2\sum_{\lambda,i}\|Q_{\lambda,i}\|_\infty^2$ via quantum Pinsker.
A small asymmetry yields a small residual and a proportionately  large
prefactor saving. 
\emph{(iii) Cascade.} 
Let $\rho_0$ be invariant only under a subgroup
$H\le G$ ($\mathcal{G}_H\rho_0=\rho_0$).
Irreducible representations lacking a  $H$-fixed vector are projected out.
The surviving rate matches the slowest $g_\lambda$ among the remaining
$H$-active irreducible representations.
The achievable jumps thus form a lattice indexed by the subgroup lattice of $G$.
\end{theorem}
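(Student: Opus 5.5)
The plan is to work entirely in the slow-mode expansion \eqref{eq:expand}, with the exact block structure of Lemma~\ref{lem:cov} and the rate splitting of Lemma~\ref{lem:schur}. By Lemma~\ref{lem:cov}, each exact slow eigen-observable of $\Lind^\dagger$ lies in a single isotypic component $\mathcal{B}_\lambda$ with $\lambda\neq$ trivial. By Kato continuity, each such eigen-observable is $\ell_{\lambda,i}(\eta)=Q_{\lambda,i}+O(\eta)$, where the correction also stays in $\mathcal{B}_\lambda$. The key identity is that for any $X\in\mathcal{B}_\lambda$ with $\lambda$ nontrivial, $\mathcal{G}^\dagger X=\mathcal{G}X=0$, since the Haar average of a nontrivial irrep vanishes by Schur orthogonality. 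Because $\sig$ is $G$-invariant, this gives
\[
a_{\lambda,i}(\rho_0)=\Tr[\ell_{\lambda,i}^\dagger(\rho_0-\sig)]=\Tr[\ell_{\lambda,i}^\dagger(\rho_0-\mathcal{G}\rho_0)],
\]
and the identity is exact in $\eta$, not merely leading order. This is the main advantage over the Abelian Theorem~\ref{prop:symmetry}, where only an $O(\eta)$ residual was obtained.

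\emph{(i)} If $\mathsf{A}_G(\rho_0)=0$, then $\rho_0=\mathcal{G}\rho_0$, so every slow overlap vanishes identically. The slowest populated mode $m$ in the proof of Theorem~\ref{prop:dichotomy} then lies in the bulk, with rate $g_m\ge g_{\mathrm{bulk}}-O(\eta)=\Theta(1)$ by Lemma~\ref{lem:schur} and Proposition~\ref{prop:sandwich}. Theorem~\ref{prop:dichotomy} then yields $t_{\mathrm{mix}}=O(\log(1/\epsilon))$.

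\emph{(ii)} Starting from the identity above, Hölder's inequality gives $|a_{\lambda,i}|\le\|\ell_{\lambda,i}\|_\infty\,\tracenorm{\rho_0-\mathcal{G}\rho_0}$. Quantum Pinsker gives $\tracenorm{\rho-\mathcal{G}\rho}^2\le 2S(\rho\|\mathcal{G}\rho)=2\mathsf{A}_G(\rho_0)$. Squaring and summing over $(\lambda,i)$ produces the stated constant $C$, once $\ell_{\lambda,i}$ is identified with $Q_{\lambda,i}$. To be careful, I will either state $C$ with $\|\ell_{\lambda,i}\|_\infty$, or absorb the $O(\eta)$ correction as $C(1+O(\eta))$; I also need to fix the normalization of the $Q$'s to match the KMS-orthonormal $\ell$'s.

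\emph{(iii)} If $\mathcal{G}_H\rho_0=\rho_0$, then $a_{\lambda,i}=\Tr[\ell_{\lambda,i}^\dagger\mathcal{G}_H(\rho_0-\sig)]=\Tr[(\mathcal{G}_H\ell_{\lambda,i})^\dagger(\rho_0-\sig)]$, using that $\sig$ is also $H$-invariant. Since $\mathcal{G}_H$ is the projection onto $H$-fixed vectors within $V_\lambda\otimes$(multiplicity), it annihilates the whole $\lambda$-block whenever $\mathrm{Res}^G_H\lambda$ contains no trivial $H$-component. Hence only $H$-active irreps can be populated, and Theorem~\ref{prop:dichotomy} gives the rate $\min\{g_\lambda:\lambda \text{ is } H\text{-active}\}$. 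Monotonicity follows because $H'\le H$ implies that $H$-fixed vectors are $H'$-fixed, so the active set grows as the subgroup shrinks. The map $H\mapsto$ (active set, rate) is therefore order-reversing, which gives the lattice-indexed cascade.

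The main obstacle is making the statements rigorous at finite $\eta$. The perturbed slow eigenvectors may mix different copies of the same $\lambda$ with bulk copies, so I will lean only on the exact isotypic containment from Lemma~\ref{lem:cov}. Perturbation theory will be used only for the norm bound in (ii) and for asserting that an $H$-active $\lambda$ is generically populated, where the rate is attained unless there is an accidental cancellation.
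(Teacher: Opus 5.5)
Your proposal is correct and follows the paper's skeleton: Haar averaging annihilates nontrivial-irrep observables, H\"older plus quantum Pinsker gives (ii), $H$-averaging kills blocks with no $H$-fixed vector for (iii), and Theorem~\ref{prop:dichotomy} turns overlaps into rates. The difference is that you apply the annihilation to the exact eigen-observables rather than to the charges, and this matters.

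\emph{The paper's route.} It substitutes $\ell_{\lambda,i}=Q_{\lambda,i}+O(\eta)$ first and then uses $\mathcal{G}Q_{\lambda,i}=0$. This gives $a_{\lambda,i}=\Tr[Q_{\lambda,i}^\dagger(\rho_0-\mathcal{G}\rho_0)]+O(\eta)$, with an \emph{additive} error. Read literally, that only shows a $G$-invariant input has $O(\eta)$ slow overlaps, which by Remark~\ref{rem:exact} is a prefactor effect rather than a rate jump. It also makes (ii) hold only up to an additive term that dominates once $\mathsf{A}_G(\rho_0)\lesssim\eta^2$.

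\emph{Your route.} Using the exact isotypic containment of Lemma~\ref{lem:cov}, you get $a_{\lambda,i}=\Tr[\ell_{\lambda,i}^\dagger(\rho_0-\mathcal{G}\rho_0)]$ at every $\eta$. So (i) is an exact vanishing, and (ii) holds with $\|\ell_{\lambda,i}\|_\infty$ (equivalently $C(1+O(\eta))$). Perturbation theory enters only through the norm comparison.

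\emph{One step to make explicit.} You need that no slow eigen-observable in $\mathbb{I}^\perp$ lies in the trivial block. Kato continuity works, but there is a cleaner route. In the dressed picture of Proposition~\ref{prop:sandwich}, Definition~\ref{def:nonabelian} gives $\mathsf H_0\succeq g_{\mathrm{bulk}}$ on the trivial block orthogonal to $\sig^{1/2}$, while $\mathsf H_1\succeq0$. Hence every trivial-sector rate is at least $g_{\mathrm{bulk}}$ for all $\eta\ge0$. This makes (i) fully nonperturbative, since a $G$-invariant $\rho_0-\sig$ lies entirely in that block. It also sharpens your $g_m\ge g_{\mathrm{bulk}}-O(\eta)$ to $g_m\ge g_{\mathrm{bulk}}$.

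The paper's version buys only brevity and the operational reading of overlaps as charge-multiplet expectation deviations. Your caveats are genuine points its proof passes over: generic (not guaranteed) population of $H$-active irreps, mixing among copies of the same $\lambda$, and the KMS normalization of the $Q_{\lambda,i}$ needed for $C$ to be meaningful.
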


\begin{proof}
Lemma~\ref{lem:schur} identifies the slow eigen-observables as the $Q_{\lambda,i}$ up to $O(\eta)$.
This identification implies 
$a_{\lambda,i}(\rho_0)=\Tr[Q_{\lambda,i}^\dagger(\rho_0-\sig)]+O(\eta)$.
To leading order, slow overlaps equal the deviations of the charge-multiplet expectations from thermal equilibrium.
The $Q_{\lambda,i}$
carry nontrivial irreducible representations, forcing $\mathcal{G}(Q_{\lambda,i})=0$.
The averaging is self-adjoint and trace-preserving with $\mathcal{G}\sig=\sig$.
These properties imply
$\Tr[Q_{\lambda,i}^\dagger\,\mathcal{G}\rho_0]=\Tr[(\mathcal{G}Q_{\lambda,i})^\dagger\rho_0]=0$ and
$\Tr[Q_{\lambda,i}^\dagger\sig]=0$.
Therefore, 
$a_{\lambda,i}=\Tr[Q_{\lambda,i}^\dagger(\rho_0-\mathcal{G}\rho_0)]+O(\eta)$.
This bound yields
$|a_{\lambda,i}|\le\|Q_{\lambda,i}\|_\infty\tracenorm{\rho_0-\mathcal{G}\rho_0}+O(\eta)$. 
Statement (i) corresponds to $\mathcal{G}\rho_0=\rho_0$. Statement (ii) follows from the quantum Pinsker inequality
$\tfrac12\tracenorm{\rho_0-\mathcal{G}\rho_0}^2\le \mathsf{A}_G(\rho_0)$, yielding the constatnt
$C=2\sum_{\lambda,i}\|Q_{\lambda,i}\|_\infty^2$. 
For statement (iii), the condition $\mathcal{G}_H\rho_0=\rho_0$ makes
$\rho_0$ $\sig$-orthogonal to every irreducible representation lacking an $H$-fixed vector because the $H$-averaging of
such a multiplet vanishes.
The remaining overlaps persists. 
Applying the Abelian dichotomy
(Theorem~\ref{prop:dichotomy}) to the slowest surviving rate determines the mixing time.
\end{proof}

\begin{corollary}[First moments are not enough]\label{cor:firstmoments}
In the Abelian case, the entire warm start is the single linear constraint Eq.\eqref{eq:warmcond}: match one charge expectation. 
Porting this recipe to a non-Abelian $G$ by preparing $\rho_0$ with all first moments of the conserved charges thermal,
$\langle Q_{\lambda_1,i}\rangle_{\rho_0}=\langle Q_{\lambda_1,i}\rangle_\sig$ fails to produce the rate jump.
Matching the first moments nulls only the overlaps with the defining (adjoint)
multiplet.
Every higher slow multipole (quadrupole and beyond) in $\Lambda$ retains its own $O(\eta)$ rate and becomes the new bottleneck. 
This matches exactly one step up the cascade of
Theorem~\ref{prop:nonabelian}(iii).
It does not produce the full jump. 
The full jump requires $\mathsf{A}_G(\rho_0)=0$.
This condition demand genuine $G$-invariance. 
This condition is nonlinear in $\rho_0$. 
An $O(1)$ cost achieves this condition via the Haar-averagin
$\mathcal{G}(\rho)$ of any state, the maximally mixed state, or by any $G$-symmetric product state. 
Concretely, consider the $SU(2)$ sampler I will study in depth in Section~\ref{sec:example}.
An input with thermal first moment $\langle\vec S_{\mathrm{tot}}\rangle$ but non-thermal quadrupole coherences
$\langle S^aS^b\rangle$ project out the spin-$1$ triplet. 
It then mixes at the slower spin-$2$ rate rather than the
bulk rate.
\end{corollary}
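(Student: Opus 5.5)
The plan is to treat Corollary~\ref{cor:firstmoments} as a direct specialization of Theorem~\ref{prop:nonabelian} together with Lemma~\ref{lem:schur}. The argument has two parts. The negative part shows that matching first moments nulls only the defining multiplet and leaves the others alive. The positive part shows that full $G$-invariance suffices and can be prepared cheaply.

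\textbf{Step 1: which overlaps first-moment matching kills.} By Lemma~\ref{lem:schur} and the first line of the proof of Theorem~\ref{prop:nonabelian}, the slow overlaps are
\[
a_{\lambda,i}(\rho_0)=\langle Q_{\lambda,i}\rangle_{\rho_0}-\langle Q_{\lambda,i}\rangle_\sig+O(\eta).
\]
Imposing $\langle Q_{\lambda_1,i}\rangle_{\rho_0}=\langle Q_{\lambda_1,i}\rangle_\sig$ therefore sets $a_{\lambda_1,i}=O(\eta)$. The corresponding residual is then handled as in Remark~\ref{rem:exact}. The constraint says nothing about $a_{\mu,j}$ for $\mu\neq\lambda_1$.

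\textbf{Step 2: the remaining overlaps can be nonzero.} This is done by exhibiting, for each $\mu\in\Lambda\setminus\{\lambda_1\}$, a perturbation $\rho_0=\sig+\delta X$ with the following properties. Take $X=\Gamma_\sig(Q_{\mu,j})+\Gamma_\sig(Q_{\mu,j})^\dagger$, which is Hermitian and traceless. The components lie in distinct isotypic sectors, and by the $\inner{\cdot}{\cdot}_\sig$-orthogonality of those sectors (Lemma~\ref{lem:cov}), $X$ is orthogonal to every $Q_{\lambda_1,i}$. Hence all first moments stay thermal. Biorthogonality, however, gives $a_{\mu,j}\neq0$. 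Positivity holds for small $\delta$ because $\sig>0$.

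By Theorem~\ref{prop:dichotomy}, the slowest populated mode then has rate $\min_{\mu\ne\lambda_1}g_\mu=O(\eta)$. This is exactly case (iii) of Theorem~\ref{prop:nonabelian} with only $\lambda_1$ removed, and the mixing time remains $\Theta(\eta^{-1}\log(1/\epsilon))$, not $O(\log(1/\epsilon))$. For the $SU(2)$ instance, take $\lambda_1$ to be spin $1$ and $\mu$ to be spin $2$. Then $X$ is a traceless symmetric quadrupole built from $S^aS^b$, which matches the stated example.

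\textbf{Step 3: sufficiency and cost.} Statement (i) of Theorem~\ref{prop:nonabelian} already shows that $\mathsf{A}_G(\rho_0)=0$ kills every slow overlap. Nonlinearity is checked by noting that $\mathsf{A}_G$ is a relative entropy, which is convex and not affine. Equivalently, the condition $\mathcal{G}\rho_0=\rho_0$ is an infinite family of constraints on all multipoles, not a finite set of linear moment constraints of the first-moment type.

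For the cost, I observe the following. The state $\mathcal{G}(\rho)$ is invariant by idempotence of the projection $\mathcal{G}$. The maximally mixed state is invariant. A product $\bigotimes_k\tau_k$ with each $\tau_k$ invariant under the on-site representation is invariant when $U_g=\bigotimes_k u_g$. Preparing $\mathcal{G}(\rho)$ only requires applying a Haar-random $U_g$, which for a tensor-product representation is a product of local unitaries, so the depth is $O(1)$.

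The main obstacle is Step 2. There I must make sure that the higher multipoles $\mu\neq\lambda_1$ genuinely belong to $\Lambda$, i.e.\ lie in $\ker\Lind_0^\dagger$, and have $q_\mu>0$ of order one, so that they truly form the new $O(\eta)$ bottleneck. In the abstract setting this is an assumption on $\Lambda$ and should be stated as the condition $|\Lambda|\ge2$. For the $SU(2)$ model, I would verify it by showing that the polynomials in $\vec S_{\mathrm{tot}}$ are conserved by the strong symmetry, and then citing the computed spin-$2$ rate from Section~\ref{sec:example}.
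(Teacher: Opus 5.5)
Your treatment of the main claim is sound and follows the paper's own reasoning, which appears only in the prose of the corollary: slow overlaps are charge-expectation deviations (Lemma~\ref{lem:schur}), first moments constrain only the $\lambda_1$ block, Theorem~\ref{prop:nonabelian}(i) gives sufficiency, and a Haar twirl gives the cheap preparation. Your explicit input $\rho_0=\sig+\delta X$ is a useful sharpening, since the paper describes but never constructs a first-moment-thermal state with a populated higher multipole.

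The step that fails is the ``nonlinear in $\rho_0$'' argument. Convexity of $\rho\mapsto\mathsf{A}_G(\rho)$ shows that the \emph{function} is not affine. It says nothing about its zero set. Relative entropy is faithful, so $\mathsf{A}_G(\rho_0)=0$ iff $(\mathrm{id}-\mathcal{G})\rho_0=0$. That is the kernel of a linear map on a $d^2$-dimensional space: finitely many linear constraints $\Tr[Y\rho_0]=0$, with $Y$ running over a basis of the nontrivial isotypic components, not an ``infinite family''. So this sub-claim is false as literally stated and cannot be proved. What you can prove, and should state instead, is that the condition is linear in $\rho_0$ but constrains \emph{every} nontrivial multipole. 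It is therefore not of first-moment type: it is of higher degree in the charges, not in $\rho_0$.

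In the same spirit, your Step~3 shows only that $G$-invariance suffices, and that is the only defensible reading of ``requires''. Take $\sig+\delta\,\Gamma_\sig(Y)$ with Hermitian $Y$ spanned by bulk eigen-observables of nontrivial isotypic type; such $Y$ exists as soon as some nontrivial block contains a bulk mode. This input has every slow overlap equal to zero, yet $\mathsf{A}_G>0$.

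Some smaller corrections follow.

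In Step~1, a generic first-moment-matched input has $a_{\lambda_1,i}=O(\eta)\neq0$. By Theorem~\ref{prop:dichotomy} and Remark~\ref{rem:exact}, that residual is only a prefactor effect as $\epsilon\to0$. The remark does not ``handle'' it; it says precisely that it spoils the jump. Your Step~2 input escapes this, but you should state why: $X$ lies in the $\mu$-isotypic component, and Lemma~\ref{lem:cov} is exact at every $\eta$. Hence the true eigen-operators $\ell_{\lambda_1,i}(\eta)\in\mathcal{B}_{\lambda_1}$ give $a_{\lambda_1,i}=0$ exactly. Also fix the phase of $Q_{\mu,j}$: if $Q_{\mu,j}$ is anti-Hermitian, your $X$ vanishes.

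Your choice of a \emph{traceless} quadrupole is essential for the $SU(2)$ claim. A generic non-thermal $\langle S^aS^b\rangle$ also shifts $\langle\vec S_{\mathrm{tot}}^2\rangle$. That populates the spin-$0$ slow mode of Section~\ref{sec:example} (rate $0.137$, outside Definition~\ref{def:nonabelian}), and the input then mixes at that rate rather than the spin-$2$ rate.

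Finally, your ``main obstacle'' is smaller than you think. For the negative claim you need only some $\mu\neq\lambda_1$ in $\Lambda$, not $q_\mu=\Theta(1)$. The dressed observable $\sig^{1/4}(Q_{\mu,j}+Q_{\mu,j}^\dagger)\sig^{1/4}$ lies in $\mathcal{K}_\mu$, where the quadratic form of $\mathsf{H}(\eta)$ equals $\eta q_\mu$ (Proposition~\ref{prop:sandwich}). So your input necessarily populates a mode of rate at most $\eta q_\mu=O(\eta)$.
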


\begin{remark}[Measurability and the logical directions]\label{rem:asym}
The single-charge case recovers the logical-sector initialization of
\cite{BergamaschiGheissariLiu}.
The non-Abelian multiplet, its irreducible-representation-indexed cascade, and the asymmetry target constitute the additions of this framework. 
The target is measurable, $\mathsf{A}_G$ and its
$H$-residuals follow from charged moments through the randomized-measurement primitives.
Self-certifying samplers already use these primitives.
One estimates initialization quality directly from the
prepared state.
Consider a representation carrying multiplicities:
$U_g=\bigoplus_\lambda D_\lambda(g)\otimes\mathbb{I}_{m_\lambda}$. The commutant carries logical operators on the multiplicity spaces.
This structure defines a noiseless subsystem.
The input matches this thermal data separately. 
The asymmetry clears the gauge directions.
The logical directions are matched independently. 
This mechanism represents the open-system face of a quantum code.
I will further discuss this issue in the outlook section.
\end{remark}

\section{A concrete example: a genuine \texorpdfstring{$SU(2)$}{SU(2)} Davies sampler}\label{sec:example}

I now evaluate the cascade on the simplest non-Abelian example: a concrete, fully reversible $SU(2)$ Davies sampler. 
I check every structural hypothesis of Definition~\ref{def:nonabelian} via  numerical diagonalization rather than assumed.

\paragraph{Model.}
Take $N=4$ spin-$\tfrac12$ sites with the non-uniform Heisenberg chain
$H=\sum_{i=1}^{3}J_i\,\vec S_i\!\cdot\!\vec S_{i+1}$. The couplings are $(J_1,J_2,J_3)=(1.0,0.7,1.3)$, at $\beta=1.2$. 
This Hamiltonian possesses $SU(2)$-symmetry, $[H,\vec S_{\mathrm{tot}}]=0$.
The stationary state $\sig$ is rotation invariant.
The non-uniform couplings split the spin multiplets in energy.
Consequently, scalar couplings drive genuine relaxation within each total-spin sector.
I construct two Davies generators with
the KMS rate $\gamma(\omega)=(1+e^{\beta\omega})^{-1}$.
The coherent term is omitted, leaving a purely
dissipative reversible generator.
\begin{itemize}[leftmargin=1.4em,itemsep=1pt]
\item $\Lind_0$ is built from the $SU(2)$-{scalar} couplings $\{\vec S_i\!\cdot\!\vec S_j\}_{i<j}$. 
These jumps commute with $\vec S_{\mathrm{tot}}$. Therefore, $\Lind_0$ carries a 'strong' $SU(2)$ symmetry and conserves total spin.
\item $\Lind_1$ is built from the 'vector' couplings $\{S^a_i\}_{a=x,y,z;\,i}$. This rotation-closed set makes $\Lind_1$ is $SU(2)$-covariant but relaxes total spin.
\end{itemize}
The full generator is $\Lind=\Lind_0+\eta\Lind_1$.

\paragraph{Verified structure.}
Full diagonalization of the $256$-dimensional Liouvillian verifies each hypothesis to machine precision.
The thermal state $\sig$ is stationary for both parts, satisfying ($\|\Lind_{0,1}(\sig)\|\sim10^{-11}$).
Both are trace-preserving and GNS-reversible, yielding a real spectrum to within $10^{-16}$).
The unperturbed $\Lind_0$ conserves the total spin, verified by
$|\Tr(\vec S_{\mathrm{tot}}\,\Lind_0(\rho))|\sim5\times10^{-16}$, while
$\Lind_1$ breaks this conservation.
Both terms are $SU(2)$-covariant, with ($\|[\mathcal{U}_g,\Lind_{0,1}]\|\sim10^{-14}$).
The eigen-operators of $\Lind$ carry definite spin as demanded by  Lemma~\ref{lem:cov}.
Applying the adjoint Casimir $\mathcal{C}(X)=\sum_a[S^a_{\mathrm{tot}},[S^a_{\mathrm{tot}},X]]=j(j+1)X$ to each slow mode returns integers $j(j+1)\in\{2,0,6\}$ to within $10^{-12}$.

\paragraph{The slow spectrum and the cascade.}
At $\eta=0.15$, the slow modes are exactly the $SU(2)$ multipoles of the total spin:
\[
\underbrace{g\approx0.101}_{\text{spin-1 dipole }(\times3)}\ <\
\underbrace{0.137}_{\text{spin-0 scalar}}\ <\
\underbrace{0.191}_{\text{spin-2 quadrupole }(\times5)}\ \ll\ O(1)\ \text{bulk}.
\]
Each rate scales 'linearly' in $\eta$. Halving $\eta$ halves every slow rate to a ratio of $2.00$.
This confirms the $g_\lambda=\eta q_\lambda$ scaling of Lemma~\ref{lem:schur}. 
Figure~\ref{fig:cascadeB}
illustrates the resulting cascade via the trace distance to $\sig$. 
A generic input is limited by the
spin-1 dipole, which forms the globally slowest mode. 
An $SU(2)$-invariant input where $\mathsf{A}_G=0$ is prepared by
the exact Haar-averaging. It lacks dipole or quadrupole content and mixes faster, though it does not reach the bulk rate.

\paragraph{What the example exposes.}
The residual bottleneck of the symmetric input is the spin-$0$ 'scalar' mode at $0.137$. 
This direction is an $SU(2)$-invariant slow mode that the averaging cannot remove because it belongs to the trivial irreducible representation.
It represents the conserved population imbalance between total-spin sectors.
This corresponds to the logical data of the
strong symmetry rather than gauge data. 
This numerical test quantifies the multiplicity and commutant subtlety described in 
Remark~\ref{rem:asym}.
The $\mathsf{A}_G$ targets the nontrivial-irreducible-representation multiplets, which are the
dipole and quadrupole in this model.
Minimizing it yields a real but partial speedup over $0.101\to0.137$. 
The full jump to the bulk requires additionally matching the conserved scalar sector data separately, leaving only the $O(1)$ bulk modes. 
The sampler realizes both components of the theory simultaneously: the asymmetry cascade and the separate matching of logical data. 

\begin{figure}[t]
\centering
\includegraphics[width=0.74\linewidth]{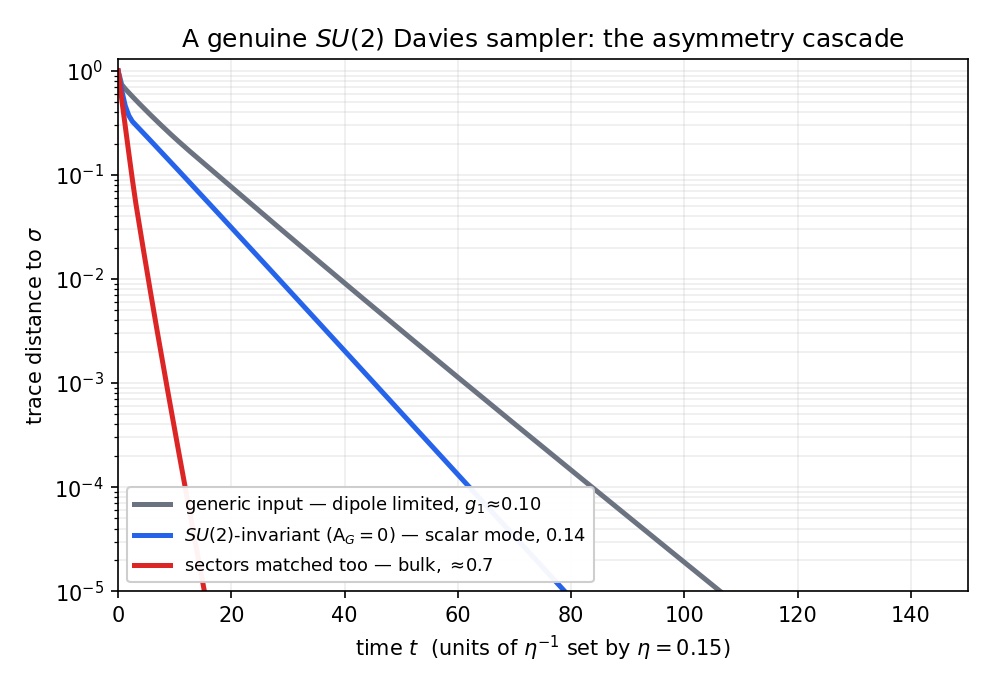}
\caption{The asymmetry cascade on the genuine $SU(2)$ Davies sampler of Section~\ref{sec:example}
($N=4$ Heisenberg, $\beta=1.2$, $\eta=0.15$; trace distance to $\sig$, all rates verified against the
diagonalized generator). \textbf{Grey:} a generic input is limited by the spin-$1$ dipole, the
slowest mode, $g_1\approx0.10$. \textbf{Blue:} an $SU(2)$-invariant input ($\mathsf{A}_G=0$) has
nulled the dipole and quadrupole and mixes faster, but is limited by a residual $SU(2)$-\emph{invariant}
slow mode---the conserved sector population, rate $0.14$ (Remark~\ref{rem:asym}; this is the
trivial-irreducible-representation direction the averaging cannot touch). \textbf{Red:} matching that conserved scalar datum as
well leaves only $O(1)$ bulk modes. All slow rates scale linearly in the breaking $\eta$, confirming
Lemma~\ref{lem:schur}.}
\label{fig:cascadeB}
\end{figure}

\section{Scope and limitations}\label{sec:tight}

The construction possesses sharp boundaries. Defining these boundaries clarifies the operational scope of the framework.

\paragraph{Covariance of the breaking is essential.}
The structural backbone, Lemma~\ref{lem:cov}, of the cascade requires the full generator to be \(G\)-covariant. If the perturbation $\Lind_1$ breaks covariance instead of merely breaking conservation, $\Lind^\dagger$ stops functioning as an intertwiner. The isotypic blocks mix at order \(O(\eta)\), and the slow eigen-operators no longer carry definite irreducible representations.Consequently, the symmetry charges cease to be the slow modes, destroying the a priori warm start. The system reverts to the generic identification cost described in Section~\ref{sec:dichotomy}. Covariance defines the strict boundary between weakly broken symmetry where conservation is lifted but structure remains intact, and total symmetry loss.

\paragraph{Asymmetry targets only the gauge directions.}
The asymmetry measure \(\mathsf{A}_{G}\) vanishes if and only if \(\rho \) is \(G\)-invariant. It is blind to slow modes living in the trivial irreducible representation, which constitutes the commutant or multiplicity sector. When the representation carries multiplicities, \(U_g=\bigoplus_\lambda D_\lambda(g)\otimes\mathbb{I}_{m_\lambda}\), these modes correspond to logical operators on the \(m_{\lambda }\)-dimensional multiplicity spaces. This structure forms a noiseless subsystem.Any near-conserved logical datum remains a slow mode that minimizing \(\mathsf{A}_{G}\) cannot remove. The \(SU(2)\) example in Section~\ref{sec:example}\ documents this effect. The residual \(0.137\) scalar mode represents a conserved sector population, meaning symmetrization yields only a partial speedup from \(0.101\) to \(0.137\).A complete warm start requires satisfying two independent conditions: matching the asymmetry gauge directions and matching the thermal logical data of the commutant. The theorem resolves the first condition. The second requires matching a model-specific, finite set of expectations.

\paragraph{Perturbation order and rate clustering.}
The slow rate scales as \(g_\lambda=\eta q_\lambda+O(\eta^2)\), where \(q_{\lambda }\) is the Schur scalar of $\Lind_1^\dagger$ on \(V_{\lambda }\). Proposition~\ref{prop:sandwich}\ bounds this error term explicitly and establishes a global linear lower bound. Generically, \(q_\lambda>0\) and the splitting occurs at first order. If the leading covariant coupling annihilates a multiplet so that \(q_\lambda=0\), that multiplet relaxes at order \(O(\eta^2)\). This isolation enhances the rate jump while leaving the warm-start condition invariant.Conversely, the resolution of the cascade depends on the distinctness of the \(q_{\lambda }\) values. If two inequivalent irreducible representations share a rate, their subgroup steps merge. If all \(q_{\lambda }\) values cluster, the cascade degrades into multiple \(O(1)\)-ratio steps rather than a single large jump. By Theorem~\ref{prop:dichotomy}\, each step provides a genuine rate improvement only when the surviving rate is well separated from the subsequent lower rate.

\paragraph{Pinsker inequality is one-directional.}
The quantitative bound in Theorem~\ref{prop:nonabelian}(ii)\ controls the slow residual via \(\mathsf{A}_{G}\) through the quantum Pinsker inequality. Small asymmetry implies a small residual, but the converse is not assumed. Approximate symmetrization reduces the prefactor scale by \(\mathsf{A}_{G}^{1/2}\).As noted in Remark~\ref{rem:exact}\, this reduction becomes a true rate jump only when the residual is suppressed to the target tolerance. Approximate symmetry provides a quantitative advantage, but the qualitative rate jump requires exact invariance.

\section{Discussion and outlook}\label{sec:disc}
The cascade defines the structural signature separating a non-Abelian symmetry from an Abelian one and from a single conserved logical sector. It provides a lattice of partial warm starts, with one per subgroup, where each step clears a specific set of irreducible multiplets. Matching first moments is provably insufficient (Corollary~\ref{cor:firstmoments}). The initialization target is a resource-theoretic asymmetry rather than a standard checklist of expectation values. As the worked sampler demonstrates, this asymmetry composes with a separate matching of the conserved logical data to provide a full accounting of the slow manifold.

Two research threads extend from these results.

First, because \(\mathsf{A}_{G}\) and its subgroup residuals are estimable by the same randomized-measurement primitives that certify samplers, the quality of a warm start is measurable directly on the device. This ties the initialization protocol directly to self-certified halting criteria.

Second, the multiplicity and commutant structure that limits the asymmetry matches the mathematical structure of a quantum error-correcting code. It acts as a noiseless subsystem whose thermal logical content must be matched independently. Viewed from this perspective, the slow manifold represents a protected sector, making the cooling bottleneck dual to memory protection. This reveals a fundamental connection between dissipative sampling and quantum self-correction that will be developed in future work.

The group-averaging asymmetry governing the warm start in this framework is the exact same quantity that controls the entanglement-asymmetry quantum Mpemba effect in closed-system symmetry restoration \cite{AresMurcianoCalabrese}. The exact-nulling dichotomy proven here forms its dissipative analogue \cite{LuRaz}. The mechanism established in this work is purely spectral and algorithmic, an initialization strategy that intentionally places zero weight on the slow eigenspace, rather than a thermal-relaxation anomaly. The two distinct lines of research meet at the asymmetry as their shared control quantity.

\section*{Acknowledgements}
I thank Sebastian Diehl and Ruizhe Zhang for useful conversations. I am also grateful to Thiago Bergamaschi for a close reading of a draft and for detailed
comments. Among them, the trace-norm convention for mixing times, the practical status of strong
symmetries, and the suggestion that the perturbative gap analysis be complemented by a vectorized,
non-perturbative one via the concavity of the spectral gap \cite{Onorati} materially improved the present paper. 

Part of this work was performed while I was visiting the Institute for Theoretical Physics at Cologne University (Germany), the Institute for Pure and Applied Mathematics (IPAM, USA), the Simons Institute for the Theory of Computing (SIfTC, USA), and the Fields Institute for Research in Mathematical Sciences (FIRMS, Canada). This work was supported in part by the U.S. National Science Foundation and the Department of Energy through IPAM and SIfTC, by the Simons Foundation through SIfTC, and by the National Research Foundation of Korea (NRF) (RS-2021-NR060112) and Kwangwoon University. I used ChatGPT5.6 to write and execute the Python plotting script for Figure 1. I verified all plotted theoretical rates, codes and data.

\end{document}